\newtheorem{theorem}{Theorem}[section]
\newtheorem{corollary}{Corollary}[section]
\newtheorem{lemma}{Lemma}[section]
\theoremstyle{definition}
\newtheorem{definition}{Definition}[section]
\newtheorem*{remark}{Remark}
\theoremstyle{condition}
\newtheorem{condition}{Condition}[section]
\DeclareMathAlphabet{\bi}{OML}{cmm}{b}{it}
\DeclareMathAlphabet{\bcal}{OMS}{cmsy}{b}{n}
\DeclareMathAlphabet{\brmn}{OT1}{cmr}{bx}{n}
\newcommand{\bfeta}{\boldsymbol{\eta}}
\newcommand{\bpsi}{\boldsymbol{\varphi}}
\def\x{\mathbf{x}}
\def\X{\mathbf{X}}
\def\Z{\mathbf{Z}}
\def\E{\mathbf{E}}
\def\V{\mathbf{V}}
\def \a{\mathbf{a}}
\def \v{\mathbf{v}}
\def \y{\mathbf{y}}
\def \z{\mathbf{z}}
\def \e{\mathbf{e}}
\title{A Deterministic Theory for Exact Non-Convex Phase Retrieval} 
\author{\IEEEauthorblockN{Bariscan Yonel\IEEEauthorrefmark{1}\IEEEauthorrefmark{2}, \IEEEmembership{Student Member, IEEE}
    and Birsen Yazici\IEEEauthorrefmark{1}\IEEEauthorrefmark{3}, \IEEEmembership{Senior Member, IEEE}}
\thanks{\IEEEauthorrefmark{1}Yonel and Yazici are with the Department of Electrical, Computer and Systems Engineering,
Rensselaer Polytechnic Institute, 110 8th Street, Troy, NY 12180 USA, E-mail:
\IEEEauthorrefmark{2}yonelb@rpi.edu, \IEEEauthorrefmark{3}yazici@ecse.rpi.edu,
Phone: (518)-276 2905, Fax: (518)-276 6261.}
\thanks{This work was supported by the Air Force Office of Scientific Research
(AFOSR) under the agreement FA9550-19-1-0284, Office of Naval Research
(ONR) under the agreement N0001418-1-2068 and by the National Science
Foundation (NSF) under Grant No ECCS-1809234.}}
\begin{document}
%
\maketitle
\begin{abstract}
\textbf{In this paper, we analyze the non-convex framework of Wirtinger Flow (WF) 
for phase retrieval and identify a novel sufficient condition for universal exact recovery through the lens of low rank matrix recovery theory. 
Via a perspective in the lifted domain, we show that the convergence of the WF iterates to a true solution is attained geometrically under a single condition on the lifted forward model.
As a result, a deterministic relationship between the accuracy of spectral initialization and the validity of \emph{the regularity condition} is derived. 
In particular, we determine that a certain concentration property on the spectral matrix must hold uniformly with a sufficiently tight constant.
This culminates into a sufficient condition that is equivalent to a restricted isometry-type property over rank-1, positive semi-definite matrices, and amounts to a less stringent requirement on the lifted forward model than those of prominent low-rank-matrix-recovery methods in the literature. 
We characterize the performance limits of our framework in terms of the tightness of the concentration property via novel bounds on the convergence rate and on the signal-to-noise ratio such that the theoretical guarantees are valid using the spectral initialization at the proper sample complexity. 
}
\end{abstract}
\begin{keywords}
\textbf{Wirtinger Flow, non-convex optimization, low rank matrix recovery, phase retrieval, lifting, exact recovery}
\end{keywords}
\section{Introduction}
\label{sec:intro}

\subsection{Phase Retrieval}
Generalized phase retrieval (GPR) is a ubiquitous problem in science and engineering. 
The problem consists of the recovery of an object of interest $\x \in \mathbb{C}^N$ given the intensity only measurements of the form:
\begin{equation}\label{eq:phaless}
y_m = | \langle \a_m, \x \rangle |^2, \quad m = 1, 2, \cdots M, 
\end{equation} 
where $\a_m \in \mathbb{C}^N$ denotes the $m^{th}$ sampling vector.  
In literature, 
$\{ \a_m \}_{m = 1}^M$ most prominently corresponds to models such as Gaussian sampling \cite{candes2015phase}, coded diffraction patterns \cite{Candes13b}, or the rows of a known linear transformation, such as the short time Fourier transform \cite{bendory2018non}, or a particular imaging operator \cite{Chai11}. 
These models arise in problems such as X-Ray crystallography, coded diffraction imaging, optical astronomy,  quantum state tomography, array imaging, or blind channel estimation.  

One approach to address GPR is via a least-squares formulation in which the $\ell_2$ loss over intensity measurements in \eqref{eq:phaless} is minimized as follows:
\begin{equation}\label{eq:obj_func}
\underset{\z}{\text{minimize}} \ f(\z) := \frac{1}{2M} \sum_{m = 1}^M ( y_m - | \langle \a_m , \z \rangle |^2 )^2.
\end{equation}
Alternative forms of $f$ have also been popular in practice for optical imaging applications, where the $\ell_2$ loss is computed as a mismatch of amplitudes instead of intensities \cite{zhang2016reshaped}.
In solving GPR by \eqref{eq:obj_func} or using the amplitudes as measurements, the objective function $f$ is non-holomorphic and non-convex due to its invariance to global phase factors on the complex valued variable $\z$. 
Conventional methods from optical imaging literature reformulate \eqref{eq:obj_func} as a bilinear inverse problem by inserting the missing phase component as a variable, which is then solved by alternating minimization \cite{GS1972practical, fienup1978reconstruction}, or non-convex analogs of feasibility problems \cite{bauschke2003hybrid}. 
However, these methods are not equipped with practical recovery guarantees, and carry the risk of getting stuck in local minima due to the non-convexity of the problem.

Despite the ill-posed nature of the problem, there has been a significant progress in the development of provably good GPR algorithms in the last decade. 
Such methods are characterized by either one or both of the following two principles: convexification of the equality constraints and the solution set, which include \emph{lifting} based approaches \cite{Chai11, Candes13a, Candes13b, Waldspurger2015}, or a provably accurate initialization, followed by an algorithmic map that refines the initial estimate on the original signal domain \cite{netrapalli2013phase, candes2015phase, goldstein2018phasemax}. 
Notably, lifting-based approaches reformulate inversion from the quadratic equations of the form \eqref{eq:phaless} into a convex semi-definite program while squaring the dimension of the inverse problem. 
As a result, these solvers have demanding implementation costs due to computational complexity and memory requirements, which limit their applicability for large scale sensing problems. 
Essentially, methods that operate on the original signal domain evade such practical bottlenecks arising from the increased dimensionality of the inverse problem. 

\subsection{Wirtinger Flow}
The latter two-step approach for exact phase retrieval on the original signal domain was most prominently popularized by the seminal Wirtinger Flow (WF) framework \cite{candes2015phase}. 
In contrast to other state-of-the art exact phase retrieval methods, i.e., lifting or linear programming based approaches \cite{goldstein2018phasemax, hand2016elementary, bahmani2017flexible}, WF solves the original non-convex problem in \eqref{eq:obj_func} directly.

Given an initial estimate $\z_0$, WF performs steepest descent iterations by means of Wirtinger derivatives of ${f}$ as follows:
\begin{equation}\label{eq:WF_Updates}
\z_{k+1} = \z_{k} - \frac{\mu_{k}}{\| \z_0 \|^2} \nabla f(\z_{k}),
\end{equation}
where $\nabla f$ is defined as the complex gradient operator, and $\mu_{k}$ is the step size. 
The premise of WF is that if $\z_0$ is \emph{sufficiently accurate}, the iterates formed by \eqref{eq:WF_Updates} provably converge with a geometric rate to an element in the \emph{global solution set} which is defined as follows:

\begin{definition}{\emph{Global Solution Set}.}
Let
\begin{equation}\label{eq:SolutionSet}
\mathit{P} := \{ e^{j \phi} \x : \phi \in [0, 2\pi) \},
\end{equation}
where $\x \in \mathbf{C}^N$ is the ground truth of intensity measurements \eqref{eq:phaless}. The set $\mathit{P}$ is said to be
the global solution set of \eqref{eq:obj_func}.
\end{definition}

In general for any $\z \in \mathbb{C}^N$, the non-convex set of the form $\{ e^{j \Phi} \z : \Phi \in [0, 2\pi] \}$ represents an equivalence under the mapping of intensity only measurements. 
The convergence of algorithm iterates is governed by the following distance metric: 

\begin{definition}\label{def:DistMet}
Let $\x \in \mathbb{C}^N$ be an element of the solution set $\mathit{P}$. The distance of an element $\z \in \mathbb{C}^N$ to $\x$ is defined as \cite{candes2015phase}:
\begin{equation}\label{eq:distance}
\text{dist}(\z, {\x}) = \underset{\phi \in [0,2\pi]}{\text{arg min}} \ \| \z - {\x} e^{\mathrm{j} \phi} \|.
\end{equation}
The angle $\hat{\phi}$ where the minimum is achieved for a given $\z \in \mathbb{C}^N$ is denoted as $\Phi(\z)$. 
\end{definition}

In literal terms, \eqref{eq:distance} quantifies the distance of an estimate to the closest point in $\mathit{P}$, eliminating the effect of non-uniqueness caused by the global phase factors. 
As such, the exact phase retrieval refers to the iterates converging to any of the elements in the global solution set. 

Having to solve a non-convex problem, exact recovery guarantees of WF framework depend on the accuracy of the initial estimate $\z_0$ which is computed by the \emph{spectral method} \cite{netrapalli2013phase} as follows:
\begin{equation}\label{eq:spectral}
\mathbf{Y} = \frac{1}{{M}}\sum_{m = 1}^{{M}} y_{m} \mathbf{a}_m \mathbf{a}_m^H.
\end{equation}
The leading eigenvector of $\mathbf{Y}$, denoted as $\mathbf{v}_0$, is scaled by the square root of the normalized $\ell_1$-norm of the data, i.e., $\lambda_0 = M^{-1} \| \y \|_1$ to yield the initial estimate 
$\z_0 = \sqrt{\lambda_0} \mathbf{v}_0.$
Denoting $\text{dist}(\z_0, \x) = \epsilon \| \x \|$, the initial estimate determines an $\epsilon$-neighborhood of $\mathit{P}$ as follows:

\begin{definition}{\emph{$\epsilon$-Neighborhood of $\mathcal{P}$}.}
Let
\begin{equation}
\mathit{E}(\epsilon) = \{ \z \in \mathbb{C}^N : \text{dist}( \z, \mathit{P} ) \leq \epsilon \| \x \| \},
\end{equation}
where $\mathit{P}$ is the global solution set as defined in \eqref{eq:SolutionSet}. The set $\mathit{E}(\epsilon)$ is said to be the $\epsilon$-neighborhood of $\mathit{P}$.
\end{definition}

The main result of WF framework is that for Gaussian sampling and coded diffraction patterns, the initial estimate computed by the {spectral method} yields a small enough relative distance-$\epsilon$, such that the following \emph{regularity condition} holds with high probability for $M = \mathcal{O}(N \log N)$. 

\begin{condition}{\emph{Regularity Condition}.}\label{con:RegCon}
The objective function $f$ in \eqref{eq:obj_func} satisfies the regularity condition if, for all $\z \in \mathit{E}(\epsilon)$ the following inequality holds
\begin{equation}\label{eq:RegularityCond}
\text{\emph{Re}} \left( \langle \nabla {f}(\z),  (\z - {\x} e^{\mathrm{i} \Phi(\z)}) \rangle \right) \geq \frac{1}{\alpha} \text{\emph{dist}}^2 (\z, {\x}) + \frac{1}{\beta} \|  \nabla {f}(\z) \|^2
\end{equation}
with fixed $\alpha > 0$ and $\beta > 0$ such that $\alpha \beta > 4$. 
\end{condition}

Lemma 7.10 in \cite{candes2015phase} establishes that if the regularity condition is satisfied, the WF iterations are contractions with respect to the distance metric in \eqref{eq:distance} and all the algorithm iterates remain in $\mathit{E}(\epsilon)$. 
Essentially, the validity of \eqref{eq:RegularityCond} ensures that there exists no first order optimal point $\z \in \mathit{E}(\epsilon)$ other than the elements of $\mathit{P}$.

\subsection{Related Work and Our Contributions}

%

WF inspired several variants \cite{chen2017solving, chen2017, Zhang2017a, wang2017solving, wang2018phase, wang2018solving}, which improve on its performance guarantees with respect to computational and sample complexity, as well as robustness to noise and outliers.
The original WF framework has a $\mathcal{O}(1/N)$ specification on the algorithm step-size, yielding $\mathcal{O}(MN^2 \log 1/\epsilon_0)$ computational complexity for an $\epsilon_0$-relative accuracy on the final estimate of the optimization.
Furthermore, the sample complexity of the method is $M = \mathcal{O}(N \log N)$ in order to control the heavy tailed distribution of the spectral matrix, which also relates to the local curvature of the objective function within $E(\epsilon)$. 
These issues are largely mitigated by the Truncated WF (TWF) framework \cite{chen2017solving}, which guarantees convergence with a $\mathcal{O}(1)$ step-size for a linear computational complexity, and uniformly controls the tails of the spectral matrix when $M = \mathcal{O}(N)$ by devising a particular sample truncation scheme. 
Similar outcomes are achieved by alternative approaches that \emph{reshaped} the non-convex objective function to a loss over amplitudes \cite{zhang2016reshaped, wang2017solving, wang2018phase, wang2018solving} in the form of reduced sample complexity, and faster convergence.
Truncation methods were further studied in relation to increasing the robustness of these methods through median truncation \cite{Zhang2017a} or noise estimation \cite{chen2017}. 
Spectral initialization schemes were also subject to further studies such as those involving the design of optimal pre-processing functions \cite{lu2017phase, mondelli2017fundamental, luo2019optimal, ghods2018linear, gao2017phaseless}, generalizations \cite{bhojanapalli2016dropping, bariscan2018}, or alternative formulations including highest correlation estimators \cite{wang2018phase}, and orthogonality promoting methods \cite{wang2018solving}.

The aforementioned WF-inspired works offer exact recovery guarantees for phase retrieval based on a wide range of theoretical arguments which are prominently probabilistic in nature, derived through the properties of statistical models assumed for the underlying measurement maps.
Namely, the exact recovery analysis of state-of-the-art frameworks focus on establishing the regularity condition in \eqref{eq:RegularityCond} at the proper sample complexity, given that $E(\epsilon)$ is constructed by a specific initialization method.
We instead take a low-rank matrix recovery based approach to phase retrieval, and conduct a geometric analysis of the optimization problem in \eqref{eq:obj_func} for arbitrary measurement models. 

To this end, we develop a theoretical framework that unifies the key arguments that contribute to the exact recovery guarantees of WF and its intensity loss based variants under a single sufficient condition. 
Specifically, we show that one arrives at the restricted strong convexity property of the objective function around a global solution directly through a concentration bound of the spectral matrix due to the special structure of the set of rank-1, positive semi-definite (PSD) matrices.
We reach this conclusion by interpreting WF in the \emph{lifted domain}.
As a result, our framework establishes that the two steps of the non-convex optimization framework blueprinted by the seminal work in \cite{candes2015phase}, i.e., the accuracy of spectral initialization, and the regularity condition, are geometric outcomes of a less restrictive sufficient condition related to the following concentration bound:
\begin{equation}\label{eq:WF_init_intro}
\| \mathbf{Y}  - ( \mathbf{x} \mathbf{x}^H + \| \mathbf{x} \|^2 \mathbf{I} ) \| \leq \delta \| \x \|^2, \quad \text{for any} \ \x \in \mathbb{C}^N.
\end{equation}
\eqref{eq:WF_init_intro} is by no means an unexpected property. In fact, \eqref{eq:WF_init_intro} is known to hold true with high probability for Gaussian sampling and coded diffraction patterns when $M = \mathcal{O}(N \log N)$ through the concentration of the Hessian of $f$ around its expectation when evaluated at a global solution. 
Typically, it is used within the probabilistic analysis conducted for statistical models in relating the distance of the spectral initialization to the ground truth.
Its uniformity over all $\x \in \mathbb{C}^N$ is also established when $M = \mathcal{O}(N)$ under the sample truncation scheme of TWF \cite{chen2017solving}. 
In our work, we show that this is a much stronger property than it is given credit for. 
Namely, we prove that if the concentration bound in \eqref{eq:WF_init_intro} holds uniformly over all $\x$ where $\delta$ is sufficiently tight with $\delta \leq 0.184$, then the regularity condition is redundant for the exact recovery guarantees of WF starting from the spectral initialization. 
In other words, there surely exists positive $\alpha, \beta$ with $\alpha \beta > 4$ in Condition \ref{con:RegCon} such that \eqref{eq:RegularityCond} is satisfied \emph{deterministically} via the restricted strong convexity of the objective function in \eqref{eq:obj_func}. 

The resulting deterministic convergence framework amounts to two key results in this paper. 
First, under the validity of our sufficient condition, we identify the best achievable convergence rate that guarantees the exact recovery of any unknown from its intensity-only measurements. 
The upper bound on the convergence rate is determined solely by the concentration bound parameter $\delta$, which facilitates the derivation of an optimal fixed step-size $\mu_k = \mu(\delta)$ for the algorithm that is an $\mathcal{O}(1)$ constant. 
Secondly, in the presence of additive noise on the received intensity-only measurements, we determine a $\delta$-dependent lower bound for the signal-to-noise ratio such that the algorithm is guaranteed to yield an estimate that is within a bounded perturbation from the ground truth on expectation. 
Essentially, these results characterize the relationship of value of $\delta$ to the trade-offs between model parameters and performance of the algorithm with respect to convergence and stability for practical purposes, provided that $\delta \leq 0.184$.

Related to our theory of WF as a non-convex optimization framework, it is observed in \cite{chen2017solving} and \cite{sun2018geometric} that the regularity condition can be enforced by the restricted strong convexity condition due to the local Lipschitz differentiability of the objective function. 
Several key insights are built in \cite{sun2018geometric} for exactly solving \eqref{eq:obj_func}, in which a benign geometry for the objective function $f$ is realized with high probability when the number of samples are sufficiently large as $\mathcal{O}(N \log^3 N)$ for the Gaussian sampling model. 
In particular, it is observed that all local minimizers of $f$ are the elements of the global solution set $\mathit{P}$, and all its saddle points have a directional negative curvature with high probability, which allow vanilla gradient descent to converge to the exact solution even if the algorithm is initialized randomly. 
Through the implicit regularization properties of gradient descent and the incoherence property in the Gaussian sampling model, \cite{ma2019implicit} improves the step size of WF to $\mathcal{O}(1/ \log N)$ starting from spectral initialization when $M = \mathcal{O}(N \log N)$, thereby attaining a near linear computational complexity of $\mathcal{O}(MN \log N \log 1/\epsilon_0)$. 
\cite{ma2019implicit} also identifies a $\mathcal{O}(N \log^3 N)$ sample complexity for convergence of WF with a constant step-size. 
In contrast, \cite{sanghavi2017local} establishes the restricted strong convexity of the objective function \eqref{eq:obj_func} around the solution set at $\mathcal{O}(N \log^2 N)$ sample complexity for the Gaussian sampling model, with spectral initialization proven to fall within the strongly convex region with high probability.
In this paper, we improve upon the required sample complexity identified in \cite{sanghavi2017local} to $\mathcal{O}(N \log N)$ for the restricted strong convexity of \eqref{eq:obj_func} in a sufficiently wide neighborhood around the solution set that is guaranteed to include the initial spectral estimate.
Furthermore, our approach facilitates reducing the computational complexity of the algorithm to $\mathcal{O}(MN \log 1/\epsilon_0)$ for $\epsilon_0$-relative accuracy due to a constant $O(1)$ step-size, which is an improvement over \cite{candes2015phase} and \cite{ma2019implicit} at the identical sampling complexity.

A notable work in the literature in relation to our results is \cite{bhojanapalli2016dropping}.
When coupled with the result in \cite{sanghavi2017local}, \cite{bhojanapalli2016dropping} guarantees the exact recovery of low rank-$r$ matrix deterministically via restricted strong convexity with a properly set step-size that relates to the spectra of the ground truth. 
Another work that closely relates to our framework is \cite{li2018rapid}, in which a non-convex approach based on a local restricted isometry property (RIP) over rank-2 matrices is considered as sufficient and established for the Gaussian sampling model with high probability for exactly solving the blind deconvolution problem. 
In essence, our framework stands in agreement with \cite{bhojanapalli2016dropping, li2018rapid}. 
Specifically, the sufficient condition we identify on the lifted forward model for exact non-convex phase retrieval \emph{i}) deterministically yields a local RIP-2 type condition reminiscent to that of \cite{li2018rapid} when specified to hold only over the difference of two rank-1, PSD matrices, and \emph{ii}) directly implies restricted strong convexity due to the duality between lifted domain and the underlying signal domain, after which results of \cite{bhojanapalli2016dropping} become applicable. 
Furthermore, under the validity of the concentration bound over all $\x \in \mathbb{C}^N$, the sufficient condition for exact recovery is effectively converted to a RIP on the lifted forward model of the problem over the set of rank-1, positive semi-definite (PSD) matrices.
In this manner, the developed framework realizes the full characterization of an observation made in \cite{bhojanapalli2016dropping} by deriving the best uniform convergence rate and the optimal step-size for exact recovery only with respect to the restricted isometry constant (RIC). 
Thereby, our framework facilitates a bridge between the optimization based low rank matrix recovery methods, and the prominent statistical frameworks for phase retrieval with a unifying sufficient condition. 


In recent years, RIP-based conditions on the lifted measurement map have been established as sufficient for global optimality in recovering a rank-$r$ matrix from quadratic or bilinear equations via first-order non-convex optimization methods.
RIP over rank-$6r$ with a RIC less than or equal to $1/10$ is shown to be sufficient for exact recovery via Procrutes Flow \cite{tu2015low}.
RIP over rank-$4r$ matrices with a RIC less than or equal to $1/5$ is shown to guarantee the strict saddle point condition and the absence of any spurious local minima for the $\ell_2$ loss function \cite{zhu2018global}.
RIP over rank-$2r$ matrices with a RIC less than or equal to $1/5$ is shown to be sufficient for having no spurious local minima \cite{bhojanapalli2016global} whereas at most a RIC of $1/2$ over rank-$2r$ matrices is postulated as necessary in \cite{zhang2018much}. 
Notably, the sufficient condition we establish in our framework corresponds to a less restrictive RIP-type condition than those of the non-convex low rank matrix recovery methods in \cite{tu2015low, zhu2018global, bhojanapalli2016global, zhang2018much}, as it suffices that the property is satisfied only over the set of rank-1, PSD matrices. 
The major difference of our result stems from the fact that we are merely interested in restricted strong convexity within the $\epsilon$-neighborhood obtained from the spectral initialization, and not the global properties of the optimization landscape. 
As a result, to the best of our knowledge, with this paper we establish the most {minimal} RIP based framework that fully characterizes the performance guarantees of non-convex phase retrieval via WF. 
Other key works for the non-convex rank-$r$ matrix recovery problem include \cite{zheng2015convergent}, in which the regularity condition of WF is considered and shown to hold for the Gaussian sampling model, and \cite{bhojanapalli2016dropping}, \cite{wang2016unified} in which exact recovery, and stability are studied under restricted strong convexity and smoothness of the objective function, respectively. 
For further discussion on advances in non-convex low rank matrix recovery, we refer the reader to \cite{chi2018nonconvex}.

\subsection{Notation and Organization of the Paper}
The rest of the paper is organized as follows. 
In Section \ref{sec:WF}, we provide a preliminary discussion on the interpretation of WF in the lifted domain.
Section \ref{sec:Results} contains our main results, and remarks. 
Section \ref{sec:Rob} evaluates the robustness of WF in the presence of additive noise. 
In Section \ref{sec:Apdx}, we present the proofs of our results. Section \ref{sec:Conc} concludes the paper.  

We denote the elements of finite dimensional vector spaces with lower case bold letters. 
Upper case bold and italic letters are allocated for matrices and sets, respectively. Caligraphic letters are allocated for operators that act on the lifted domain in $\mathbb{C}^{N \times N}$. In denoting the norms of elements in different domains, we use the following notation: $\| \cdot \|$ denotes the $\ell_2$ norm when acting on a vector, and the spectral norm when acting on an operator. $\| \cdot \|_F$ and $\| \cdot \|_*$ denote the Frobenius and nuclear norms, respectively. $\mathbf{I}$ and $\mathcal{I}$ denote the identity operators on the vector space of $\mathbb{C}^N$, and the lifted domain in $\mathbb{C}^{N \times N}$, respectively.

\section{WF in the Lifted Domain}
\label{sec:WF}
We start by interpreting WF as a solver in the \emph{lifted domain}, and adopt the concepts of the seminal work of PhaseLift in \cite{Candes13b}.  
Lifting based approaches provide a profound perspective to the phase retrieval problem.
In principle, these methods target the core issue of non-injectivity of phaseless measurement maps, which is a key step in formulating methods that guarantee exact recovery in phase retrieval literature \cite{bandeira2014saving, goldstein2018phasemax}. 
Notably, one can consider the measurement model in \eqref{eq:phaless} as a mapping from a rank-1, positive semi-definite matrix $\x \x^H \in \mathbb{C}^{N \times N}$ instead of a quadratic map from the signal domain in $\mathbb{C}^N$.
Lifting conceptualizes this observation:
\begin{definition}{\emph{Lifting}.}
Each measurement in \eqref{eq:phaless} can be expressed in the form of an inner product of two rank-1 operators, $\tilde{\X} = \x \x^H$ and ${\mathbf{A}}_m = \mathbf{a}_m \mathbf{a}_m^H$ such that
\begin{equation}\label{eq:PhaseLift}
y_m = \langle {\mathbf{A}}_m , \tilde{\X} \rangle_F \quad m = 1, . . ., M
\end{equation}
where $\langle \cdot, \cdot \rangle_F$ is the Frobenius inner product. The process of transforming the signal recovery over $\mathbb{C}^N$ to the recovery of the rank-1 unknown $\tilde{\X} \in \mathbb{C}^{N \times N}$ is known as {lifting}. 
\end{definition}

The lifting technique introduces a new, linear measurement map $\mathcal{A}: \mathbb{C}^{N \times N} \rightarrow \mathbb{C}^M$, which we refer to as \emph{the lifted forward model}. 
Specifically, for the phaseless measurement model in \eqref{eq:phaless}, the domain of $\mathcal{A}$ is constrained on the set of rank-1, positive semi-definite (PSD) matrices $\mathit{R}^{+}_1 = \{ \z \z^H : \z \in \mathbb{C}^N \}$ as follows:
\begin{equation}\label{eq:LiftFM}
\mathbf{y} = \mathcal{A}(\x \x^H)
\end{equation}
where $\mathbf{y} = [y_1, y_2, \cdots y_M] \in \mathbb{R}^M$. 
As a result, each non-convex set of equivalent points under the mapping from the signal domain in $\mathbb{C}^N$ to the phaseless measurements, i.e., $\{ \z \mathrm{e}^{\mathrm{i} \Phi}: \Phi \in [0, 2\pi) \}$ for $\z \in \mathbb{C}^N$, is \emph{compressed} into a single element $\z \z^H  \in \mathit{R}^{+}_1$.  
Thereby, quadratic equality constraints over the signal domain are transformed to affine equality constraints in the lifted domain in $\mathbb{C}^{N \times N}$, which define a convex manifold. 

In typical inference problems, $\mathcal{A}$ has a non-trivial null space as the system of linear equations in \eqref{eq:PhaseLift} is severely underdetermined with $M \ll N^2$. 
Various studies approach the phase retrieval problem over the lifted domain, leveraging the low rank structure of the lifted ground truth $\tilde{\X} = \x \x^H$ and the subsequent LRMR theory from compressed sensing and matrix completion literature \cite{cai2010singular, recht2010guaranteed}.  
The sufficient conditions on $\mathcal{A}$ for exact recovery of $\x \x^H$ are primarily characterized by its null space \cite{recht2008necessary, recht2011null, oymak2011simplified} or restricted isometry properties on low rank \cite{recht2010guaranteed, cai2013sharp, bhojanapalli2016global} or PSD \cite{Candes13b} matrices.


Knowing that \eqref{eq:obj_func} corresponds to the minimization of an $\ell_2$ loss objective, WF exclusively iterates on the set of rank-1, PSD matrices by solving the following:
\begin{equation}\label{eq:WFform}
\underset{\mathbf{X}}{\text{minimize: }} \ \frac{1}{2M} \| \mathcal{A} ( \mathbf{X} ) - \mathbf{y} \|^2 \quad \text{s.t.} \quad \mathbf{X} \in\mathit{R}^{+}_1,
\end{equation}
where 
$\mathbf{X}$ denotes the optimization variable in the lifted domain. 
The functional constraint on $\mathbf{X}$ as rank-1, PSD matrix casts this minimization equivalent to minimizing over the signal domain variable $\z$, resulting with dimensionality reduction of the search space. 
This is practically enforced by a spectral projection within the gradient term $\nabla f$, which can be expressed as
\begin{equation}\label{eq:GradEq}
\nabla f(\z) = \frac{1}{M}\mathcal{A}^H \mathcal{A} (\z \z^H - \x \x^H) \z.
\end{equation}

Beyond the immediate gains in practicality, the formulation in \eqref{eq:WFform} reveals a theoretical advantage offered by the non-convex framework of WF, using which deterministic arguments for exact recovery similar to those of lifting-based methods can also be attained \cite{bariscan2018}. 
Moving from the convex relaxations of rank-minimization, WF corresponds to solving a non-convex feasibility problem, reminiscent of the optimizationless PhaseLift method in \cite{Demanet14}, and Uzawa's iterations in \cite{cai2010singular}. 
This yields an iterative scheme for the unrelaxed, non-convex form of the lifted problem, and enforces the rank-1, PSD structure on the iterates. 
Thereby, the constraint set of WF is considerably less restrictive than those of the prominent LRMR methods that solve for the lifted unknown $\x \x^H$. 

Namely, the problem in \eqref{eq:WFform} has a unique solution if there exists no element $\mathbf{H}$ in the null space of $\mathcal{A}$, such that $\x \x^H + \mathbf{H}$ is a rank-1, PSD matrix.
Observe that the validity of Condition \ref{con:RegCon} and non-uniqueness of $\mathcal{A}$ over the set of rank-1, PSD matrices result in a \emph{contradiction}.
Essentially, existence of a $\bpsi \notin \mathit{P}$ that satisfies \eqref{eq:phaless} ascertains either that the spectral initialization $\z_0$ is at a stationary point itself, or that the $\epsilon$-neighborhood around the $\bpsi$ or $P$ contains a stationary point, both of which violate the regularity condition. 
Hence, the degree to which $\mathbf{Y}$ approximates $\x \x^H$ in the spectral norm sense as \eqref{eq:WF_init_intro}, also dictates the feasibility of violating the uniqueness condition of $\mathcal{A}$ over the set of rank-1, PSD matrices. 
This is made further clear under the following observation. 
\begin{remark}
The spectral matrix $\mathbf{Y}$ in \eqref{eq:spectral} is the \emph{backprojection} estimate of the lifted ground truth, $\tilde{\X} = \x \x^H$, i.e.,
\begin{equation}\label{eq:WF_init2}
\mathbf{Y} = \frac{1}{M} \mathcal{A}^H (\mathbf{y} ),
\end{equation}
which, in the noise free case is $\mathbf{Y} = \frac{1}{M} \mathcal{A}^H \mathcal{A}(\x \x^H)$. 
\end{remark}

In practical terms, \eqref{eq:WF_init_intro} becomes a condition on the \emph{lifted normal operator}, $\mathcal{A}^H \mathcal{A}$. 
Thereby, the main result of this paper is that, if the following \emph{concentration bound}
\begin{equation}\label{eq:WF_init}
\| \frac{1}{M}\mathcal{A}^H \mathcal{A}(\x \x^H)  - ( \mathbf{x} \mathbf{x}^H + \| \mathbf{x} \|^2 \mathbf{I} ) \| \leq \delta \| \x \|^2,
\end{equation}
holds for all $\x \in \mathbb{C}^N$ 
with a $\delta$ that is sufficiently small, then via the special structure of the set of rank-1, PSD matrices, the iterations in \eqref{eq:WF_Updates} are guaranteed to converge to a solution in $\mathit{P}$ via the restricted strong convexity of $\mathcal{J}$ in $\mathit{E}(\epsilon)$. 
As $\delta$ gets smaller, the spectral initialization yields more accurate estimates due to favorable properties of the lifted normal operator over the set of rank-1, PSD matrices. 
Due to the fact that shrinkage on the value of $\delta$ is related to increasing the number of measurements $M$, the existence of sub-optimal minima accordingly vanishes. 
Hence, there exists a phase transition with respect to the value of $\delta$, below which the tightness of the concentration bound can deterministically guarantee exact recovery from \eqref{eq:phaless} using WF. 




\section{Main Results}
\label{sec:Results}


In 
this section, 
we prove \eqref{eq:WF_init} as a sufficient condition for exact phase retrieval for an arbitrary measurement model by establishing a set of lemmas through fully geometric arguments, given that it holds for all $\x \in \mathbb{C}^N$ with $\delta < 0.184$.
Thereby, we use the form in \eqref{eq:WF_init2} as a crucial element of our approach. 
Note that the lifted forward model in \eqref{eq:LiftFM} may be a realization from a statistical model, or a deterministic measurement map.
The following lemma characterizes the normal operator of the lifted forward model over the set of rank-1, positive semi-definite matrices. 


\begin{lemma}\label{lem:Lemma1}
Assume that \eqref{eq:WF_init} holds for any $\x \in \mathbb{C}^N$. Then, the normal operator of the lifted forward model can be expressed as follows over the set of rank-1, PSD matrices:
\begin{equation}
\frac{1}{M} \mathcal{A}^H \mathcal{A} = \mathcal{I} + \mathcal{R} + {\Delta},
\end{equation}
where for any ${\z \z^H} \in \mathit{R}^{+}_1$, $\mathcal{R}({\z} \z^H) = \| \z \|^2 \mathbf{I}$, with
\begin{equation}
\mathcal{R}({\z \z^H} - {\x \x^H}) = (\| \z \|^2 - \| \x \|^2) \mathbf{I},
\end{equation}
and ${\Delta}: \mathit{R}^{+}_1 \rightarrow \mathbb{C}^{N \times N}$ is a perturbation operator satisfying $\| {\Delta}({\x \x^H}) \| \leq \delta \| \x \|^2$ for any $\x \in \mathbb{C}^N$ such that
\end{lemma}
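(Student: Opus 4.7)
The plan is to construct the three operators in the decomposition explicitly and verify each claim, since almost all of the content of the statement is already contained in the hypothesis \eqref{eq:WF_init}. First I would define
\[
\mathcal{R}(\mathbf{X}) := \mathrm{tr}(\mathbf{X})\,\mathbf{I},
\]
as the trace-weighted identity operator, which is well-defined and linear on all of $\mathbb{C}^{N\times N}$. Restricted to $\mathit{R}^{+}_1$, it gives $\mathcal{R}(\z\z^H) = \mathrm{tr}(\z\z^H)\,\mathbf{I} = \|\z\|^2\,\mathbf{I}$, which is the identity claimed for $\mathcal{R}$ on rank-1 PSD matrices. The identity $\mathcal{R}(\z\z^H - \x\x^H) = (\|\z\|^2 - \|\x\|^2)\,\mathbf{I}$ then follows immediately from the linearity of the trace, and does not require the difference $\z\z^H - \x\x^H$ itself to lie in $\mathit{R}^{+}_1$.

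Next I would define the perturbation operator $\Delta : \mathit{R}^{+}_1 \to \mathbb{C}^{N\times N}$ by residual,
\[
\Delta(\x\x^H) := \frac{1}{M}\,\mathcal{A}^H\mathcal{A}(\x\x^H) - \x\x^H - \|\x\|^2\,\mathbf{I},
\]
so that $\frac{1}{M}\mathcal{A}^H\mathcal{A} = \mathcal{I} + \mathcal{R} + \Delta$ holds on $\mathit{R}^{+}_1$ by construction. The spectral-norm bound $\|\Delta(\x\x^H)\| \leq \delta \|\x\|^2$ for every $\x \in \mathbb{C}^N$ is then exactly the statement of the hypothesis \eqref{eq:WF_init}, so nothing beyond unpacking is needed to obtain it.

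There is no genuine obstacle in the argument; the purpose of the lemma is to introduce an \emph{operator-decomposition language}, separating $\frac{1}{M}\mathcal{A}^H\mathcal{A}$ into (i) the identity contribution $\mathcal{I}$ that preserves the rank-1 structure, (ii) a rank-1-to-identity contribution $\mathcal{R}$ that accounts for the predictable $\|\z\|^2\mathbf{I}$ term arising on $\mathit{R}^{+}_1$, and (iii) a controllable residual $\Delta$ of spectral norm at most $\delta\|\x\|^2$. The only point requiring a moment of care is to confirm that $\mathcal{R}$ extends unambiguously to be applied to differences such as $\z\z^H - \x\x^H$, which is handled by defining it a priori on the ambient space via the trace, rather than only on $\mathit{R}^{+}_1$. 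This decomposition is what subsequent lemmas will exploit to transfer the concentration hypothesis into restricted strong convexity of the objective in \eqref{eq:WFform}.
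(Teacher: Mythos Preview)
Your proof is correct and takes essentially the same approach as the paper: define $\mathcal{R}$, set $\Delta$ as the residual, and read the spectral-norm bound directly from the hypothesis. Your definition $\mathcal{R}(\mathbf{X}) = \mathrm{tr}(\mathbf{X})\,\mathbf{I}$ on all of $\mathbb{C}^{N\times N}$ is slightly cleaner than the paper's $\mathcal{R}(\mathbf{u}\mathbf{v}^H) = (\mathbf{v}^H\mathbf{u})\,\mathbf{I}$ on rank-1 matrices (the two agree since $\mathrm{tr}(\mathbf{u}\mathbf{v}^H)=\mathbf{v}^H\mathbf{u}$), and it makes the linearity and self-adjointness of $\mathcal{R}$---hence of $\Delta$---immediate, a point the paper argues separately and uses in Corollary~\ref{cor:Corr3}.
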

\begin{equation}\label{eq:DefsSpec}
\underset{\mathbf{v \in \mathbb{C}^N \setminus \{0 \}}}{\text{max}} \frac{\| {\Delta} (\v \v^H) \|}{\| \v \v^H \|} \leq \delta.
\end{equation}
\begin{proof}
See Section \ref{sec:Prf1} and Appendix \ref{sec:App}. 
\end{proof}

Specifically in the case of the Gaussian model, the operator $\mathcal{R}: \mathbb{C}^{N \times N} \rightarrow \mathbb{C}^{N \times N}$ characterizes the effect of the $4^{th}$ moments of the sampling vectors. 
This term captures the diagonal bias of the spectral matrix in estimating the lifted signal, using the fact that the expectation of the lifted normal operator is linear on $\mathbb{C}^{N \times N}$. 

We begin by considering the spectral initialization scheme. 
Namely, through Lemma \ref{lem:Lemma1}, the concentration bound in \eqref{eq:WF_init} indicates a proper scaling factor for the unit-norm eigenvector of $\mathbf{Y}$. 
This is derived in the following corollary which implies that the lifted forward model $\mathcal{A}$ is a \emph{tight frame}.
\begin{corollary}\label{cor:Corr3}
Assume that the assumptions of Lemma \ref{lem:Lemma1} hold. Then, $\mathcal{A}$ satisfies the following identity: 
\begin{equation}\label{eq:tightframe}
(2 - \delta) \| {\x \x^H} \|_F^2 \leq \frac{1}{M} \| \mathcal{A}(\x \x^H) \|^2 \leq (2 + \delta) \| {\x \x^H} \|_F^2.
\end{equation}
Furthermore, if \eqref{eq:tightframe} holds for all $\x \in \mathbb{C}^N$, then the concentration bound in \eqref{eq:WF_init} equivalently holds uniformly over all $\x \in \mathbb{C}^N$.
\end{corollary}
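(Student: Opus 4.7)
The plan is to read \eqref{eq:tightframe} as a quadratic form identity on rank-1, PSD inputs via the decomposition furnished by Lemma \ref{lem:Lemma1}, and for the converse to dualize to a spectral norm bound via polarization. For the forward direction, by the adjointness of $\mathcal{A}$,
\begin{equation*}
\frac{1}{M}\|\mathcal{A}(\x\x^H)\|^2 = \left\langle \x\x^H,\; \tfrac{1}{M}\mathcal{A}^H\mathcal{A}(\x\x^H)\right\rangle_F.
\end{equation*}
Substituting $\tfrac{1}{M}\mathcal{A}^H\mathcal{A}(\x\x^H) = \x\x^H + \|\x\|^2\mathbf{I} + \Delta(\x\x^H)$ from Lemma \ref{lem:Lemma1} and using $\langle \x\x^H,\x\x^H\rangle_F = \|\x\x^H\|_F^2 = \|\x\|^4$ together with $\langle \x\x^H,\mathbf{I}\rangle_F = \|\x\|^2$, the first two terms contribute exactly $2\|\x\x^H\|_F^2$.

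I would then control the residual $\langle \x\x^H, \Delta(\x\x^H)\rangle_F$ by trace duality. Since $\x\x^H$ is rank one with $\|\x\x^H\|_* = \|\x\|^2$, H\"older's inequality gives
\begin{equation*}
|\langle \x\x^H, \Delta(\x\x^H)\rangle_F| \le \|\x\x^H\|_* \, \|\Delta(\x\x^H)\| \le \delta\|\x\|^4 = \delta\|\x\x^H\|_F^2,
\end{equation*}
where the operator norm bound is the hypothesis of Lemma \ref{lem:Lemma1}. Combining these two computations yields both sides of \eqref{eq:tightframe}.

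For the converse, the main obstacle is that \eqref{eq:tightframe} only controls the diagonal quadratic form $\x \mapsto \langle\x\x^H,\Delta(\x\x^H)\rangle_F$ on the non-linear set of rank-1 PSD matrices, whereas \eqref{eq:WF_init} demands a full spectral norm bound on the Hermitian output $\Delta(\x\x^H)$. My strategy is to use the variational identity $\|\Delta(\x\x^H)\| = \max_{\|\mathbf{v}\|=1}|\mathbf{v}^H \Delta(\x\x^H)\mathbf{v}|$ and extract the off-diagonal bilinear form $\mathbf{B}(\x,\mathbf{v}) := \langle \mathbf{v}\mathbf{v}^H, \Delta(\x\x^H)\rangle_F$ by a four-point polarization over rank-1 probes. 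Evaluating $G(\mathbf{w}) := \tfrac{1}{M}\|\mathcal{A}(\mathbf{w}\mathbf{w}^H)\|^2 - 2\|\mathbf{w}\|^4$ at $\mathbf{w} \in \{\x \pm \mathbf{v},\, \x \pm i\mathbf{v}\}$ and summing, the odd-order cross terms cancel and one recovers
\begin{equation*}
\sum_{\mathbf{w}} G(\mathbf{w}) = 4\,G(\x) + 4\,G(\mathbf{v}) + 16\,\mathbf{B}(\x,\mathbf{v}).
\end{equation*}
Each term on the left is bounded by \eqref{eq:tightframe} applied to the corresponding rank-1 probe $\mathbf{w}\mathbf{w}^H$, and the two diagonal $G$-terms on the right are bounded likewise, so a triangle inequality followed by a supremum over unit $\mathbf{v}$ converts the frame bound into the concentration bound \eqref{eq:WF_init}. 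Self-adjointness of $\Delta$, inherited from $\mathcal{A}^H\mathcal{A}$, $\mathcal{I}$, and $\mathcal{R}$, is what makes $\mathbf{B}$ symmetric and validates the polarization. The delicate point is tracking the numerical constants so that the implication is a genuine equivalence rather than equivalence up to an absolute multiplicative factor in $\delta$.
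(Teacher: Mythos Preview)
Your forward direction is correct and matches the paper's argument essentially verbatim: evaluate the quadratic form $\langle \x\x^H, \tfrac{1}{M}\mathcal{A}^H\mathcal{A}(\x\x^H)\rangle_F$, use Lemma~\ref{lem:Lemma1}, and bound the $\Delta$-term by $\|\x\x^H\|_*\|\Delta(\x\x^H)\|\le\delta\|\x\|^4$.

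Your converse, however, has a genuine gap. First, the polarization identity you wrote is incorrect. With $X=\x\x^H$, $V=\v\v^H$, $S=\x\v^H+\v\x^H$, $A=i(\x\v^H-\v\x^H)$ and $B(P,Q)=\langle P,\Delta(Q)\rangle_F$, a direct expansion gives
\[
\sum_{\mathbf w\in\{\x\pm\v,\ \x\pm i\v\}} G(\mathbf w)=4G(\x)+4G(\v)+8\,\mathbf B(\x,\v)+2B(S,S)+2B(A,A),
\]
so the coefficient of $\mathbf B(\x,\v)$ is $8$, not $16$, and the terms $B(S,S)$, $B(A,A)$ do not vanish; they involve the rank-$2$ Hermitian matrices $S,A$, which are not covered by the rank-$1$ PSD frame bound \eqref{eq:tightframe}. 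Second, and more fundamentally, even a corrected polarization argument would only deliver $\|\Delta(\x\x^H)\|\le C\delta$ for some absolute $C>1$; you already flagged this, but it is fatal here since the Corollary asserts the \emph{same} constant $\delta$, and every downstream bound in the paper is calibrated to that exact value.

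The paper sidesteps both issues by exploiting self-adjointness of $\Delta$ in a different way. Writing $\v^H\Delta(\x\x^H)\v=\langle\sqrt{\Delta}(\x\x^H),\sqrt{\Delta}(\v\v^H)\rangle_F$ and applying Cauchy--Schwarz yields
\[
\max_{\|\x\|=1}\|\Delta(\x\x^H)\|\le \max_{\|\x\|=1}\|\sqrt{\Delta}(\x\x^H)\|_F^2=\max_{\|\x\|=1}\bigl|\langle\Delta(\x\x^H),\x\x^H\rangle_F\bigr|,
\]
while the reverse inequality is immediate from the definition of the spectral norm. Hence the two suprema are \emph{equal}, so the tight-frame constant and the concentration constant coincide exactly. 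This equality-of-maxima argument, not polarization, is the mechanism you need for the converse.
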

\begin{proof}
See Section \ref{sec:Prf1}.
\end{proof}

As a result, $\lambda_0 = (2M)^{-1/2} \| \y \|$ is an estimator for the energy of the signal $\| \x \|^2$. 
Using the norm estimate obtained by Corollary \ref{cor:Corr3} for $\lambda_0$, the distance of the spectral initialization yields the following $\epsilon$-neighborhood as a function of the concentration parameter $\delta$. 

\begin{lemma}\label{lem:Lemma2}
Assume that the assumptions of Lemma \ref{lem:Lemma1} hold. Let $\z_0$ be the estimate $\z_0 = \sqrt{\lambda_0} \v_0$ where $\v_0$ is the eigenvector corresponding to the leading eigenvalue of the spectral matrix $\mathbf{Y}$ in \eqref{eq:WF_init2}, and $\lambda_0$ is the signal energy estimate obtained as
\begin{equation}\label{eq:normest}
\lambda_0 = \frac{1}{\sqrt{2M}} \| \y \|. 
\end{equation}
Then, the initial estimate $\z_0$ satisfies $\mathrm{dist}^2(\z_0, \x) \leq \epsilon^2 \| \x \|^2$, where
\begin{equation}\label{eq:epsilonval}
\epsilon^2 = 1 + \sqrt{1 + \frac{\delta}{2}} - 2 \sqrt{\left(1 - 2 \delta\right) {\left(1 + \frac{\delta}{2}\right) }^{1/2}}.  
\end{equation}
\end{lemma}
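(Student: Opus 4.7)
The plan is to expand $\text{dist}^2(\z_0,\x)$ directly in terms of the two pieces of information that Corollary \ref{cor:Corr3} and Lemma \ref{lem:Lemma1} respectively supply about the scalar $\lambda_0$ and the top eigenvector $\v_0$, and then to carry out the correct joint optimization over the resulting free parameters.

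First, by optimizing the phase inside the definition in \eqref{eq:distance} and using $\|\v_0\|=1$, I would write
$$\text{dist}^2(\z_0,\x) = \|\z_0\|^2 + \|\x\|^2 - 2|\langle \z_0,\x\rangle| = \lambda_0 + \|\x\|^2 - 2\sqrt{\lambda_0}\,|\langle \v_0,\x\rangle|.$$
For the first statistic, I would invoke Corollary \ref{cor:Corr3}: since $\|\x\x^H\|_F^2 = \|\x\|^4$ and $\lambda_0^2 = \|\y\|^2/(2M) = \|\mathcal{A}(\x\x^H)\|^2/(2M)$, the tight-frame inequality gives
$$\sqrt{1-\delta/2}\,\|\x\|^2 \;\leq\; \lambda_0 \;\leq\; \sqrt{1+\delta/2}\,\|\x\|^2.$$

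For the second statistic, I would use Lemma \ref{lem:Lemma1} to write $\mathbf{Y} = \x\x^H + \|\x\|^2\mathbf{I} + \mathbf{E}$ with $\|\mathbf{E}\|\leq \delta\|\x\|^2$. Letting $\eta_0$ be the leading eigenvalue of $\mathbf{Y}$, the Rayleigh quotient evaluated at $\x/\|\x\|$ supplies $\eta_0 \geq (2-\delta)\|\x\|^2$, while the eigenvalue identity $\v_0^H\mathbf{Y}\v_0 = \eta_0$ unpacks into
$$|\langle \v_0,\x\rangle|^2 + \|\x\|^2 + \v_0^H \mathbf{E}\v_0 = \eta_0,$$
with $|\v_0^H \mathbf{E}\v_0|\leq \delta\|\x\|^2$. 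Combining these two inequalities I obtain the lower bound $|\langle \v_0,\x\rangle|^2 \geq (1-2\delta)\|\x\|^2$, which is informative whenever $\delta < 1/2$.

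Finally, I would substitute back and optimize. Setting $a = \lambda_0/\|\x\|^2$ and $b = |\langle \v_0,\x\rangle|/\|\x\|$, the normalized right-hand side is $g(a,b) = a + 1 - 2b\sqrt{a}$. This function is strictly decreasing in $b$, so the worst case is $b = \sqrt{1-2\delta}$. For that value of $b$, $g(\,\cdot\,,b)$ is convex in $a$ and its derivative $1 - \sqrt{(1-2\delta)/a}$ is positive on the admissible range $[\sqrt{1-\delta/2},\sqrt{1+\delta/2}]$ (the required inequality $\sqrt{1-\delta/2} > 1-2\delta$ holds for $\delta < 7/8$), so the maximum is attained at the upper endpoint $a = \sqrt{1+\delta/2}$. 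Using $(1+\delta/2)^{1/4} = \sqrt{\sqrt{1+\delta/2}}$, this yields exactly
$$\epsilon^2 = 1 + \sqrt{1+\delta/2} - 2\sqrt{(1-2\delta)\sqrt{1+\delta/2}},$$
as claimed. The main subtlety is in this last coupling step: since the same $\lambda_0$ appears both linearly and under a square root, one is not free to use its lower bound in one occurrence and its upper bound in the other, and the asymmetric form of the announced $\epsilon^2$ is the consequence of the correct coupled maximization. Once this is recognized, the remaining work is a direct application of the concentration hypothesis \eqref{eq:WF_init} through Lemma \ref{lem:Lemma1} and Corollary \ref{cor:Corr3}.
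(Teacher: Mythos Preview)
Your proposal is correct and follows essentially the same route as the paper's proof: expand $\mathrm{dist}^2(\z_0,\x)$, invoke Corollary~\ref{cor:Corr3} for the range of $\lambda_0$, use the eigenvector correlation bound $|\langle \v_0,\x\rangle|^2 \geq (1-2\delta)\|\x\|^2$, and maximize over the admissible $\lambda_0$. The only minor differences are that you derive the eigenvector bound from scratch (the paper simply cites \cite{candes2015phase}), and that you argue monotonicity of $g(a,\sqrt{1-2\delta})$ on the whole interval whereas the paper uses convexity in $\sqrt{\lambda_0}$ and then compares the two boundary values directly; both are equally valid and lead to the same endpoint $a=\sqrt{1+\delta/2}$.
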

\begin{proof}
See Section \ref{sec:Prf2}. 
\end{proof}

Note that $E(\epsilon)$ is formed using \eqref{eq:epsilonval} 
under the assumption that $\delta < 0.5$, which is required to have a valid estimate via spectral initialization. 
Next, we introduce the following lemma to characterize the relation between the distance metric introduced in \eqref{eq:distance}, and the distance in the lifted domain. 

\begin{lemma}\label{lem:Lemma3}
Let $ \z \in E(\epsilon)$ of $\x$, with $\epsilon = \epsilon_0 \| \x \|$ satisfying \eqref{eq:epsilonval}. Then, we have
\begin{equation}
h_1(\delta) \mathrm{dist}(\z, \x) \| \x \| \leq \| {\z \z^H} - {\x \x^H} \|_F \leq h_2(\delta) \mathrm{dist}(\z, \x) \| \x \|,
\end{equation}
where $h_1 = \sqrt{(1-\epsilon)(2-\epsilon)}$, and $h_2 =  (2+\epsilon)$. 
\end{lemma}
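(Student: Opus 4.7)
The plan is to absorb the optimal phase so the problem reduces to three real scalars, and then to bound each inequality via matched algebraic identities. By Definition \ref{def:DistMet}, the claim is invariant under replacing $\x$ with $\tilde{\x}=\x e^{\mathrm{j}\Phi(\z)}$, which preserves $\x\x^H$ and $\|\x\|$ while forcing $\mathrm{dist}(\z,\x)=\|\z-\tilde{\x}\|$ and $\langle\z,\tilde{\x}\rangle\in\mathbb{R}_{\geq 0}$. Dropping the tilde and writing $a=\|\z\|$, $b=\|\x\|$, $c=\langle\z,\x\rangle$ with $0\leq c\leq ab$ by Cauchy--Schwarz, a direct trace computation yields the exact identities
$$\|\z\z^H-\x\x^H\|_F^2=a^4+b^4-2c^2,\qquad \mathrm{dist}^2(\z,\x)=a^2+b^2-2c.$$
The hypothesis $\mathrm{dist}(\z,\x)\leq\epsilon\|\x\|$ then translates to $a\geq(1-\epsilon)b$ and, via $c=\tfrac12(a^2+b^2-\mathrm{dist}^2)$, to $c\geq(1-\epsilon)b^2$.

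For the upper bound, I would expand $\z\z^H-\x\x^H=\x\e^H+\e\x^H+\e\e^H$ with $\e=\z-\x$ and apply the triangle inequality together with $\|\x\e^H\|_F=\|\x\|\|\e\|$ for rank-one outer products, giving $\|\z\z^H-\x\x^H\|_F\leq 2\|\x\|\|\e\|+\|\e\|^2\leq(2+\epsilon)\|\x\|\cdot\mathrm{dist}(\z,\x)=h_2\|\x\|\cdot\mathrm{dist}(\z,\x)$.

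The lower bound is the main technical step. The core of the argument is the pair of exact identities
$$\|\z\z^H-\x\x^H\|_F^2=(a-b)^2(a+b)^2+2(ab-c)(ab+c),\qquad \mathrm{dist}^2=(a-b)^2+2(ab-c),$$
in which $ab-c\geq 0$ by Cauchy--Schwarz, so both right-hand sides split into non-negative summands. It therefore suffices to show that the two matched coefficients $(a+b)^2$ and $(ab+c)$ each dominate $h_1^2 b^2=(1-\epsilon)(2-\epsilon)b^2$. The first follows from $a\geq(1-\epsilon)b$, giving $(a+b)^2\geq(2-\epsilon)^2 b^2\geq h_1^2 b^2$. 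The second follows from $ab\geq(1-\epsilon)b^2$ together with $c\geq(1-\epsilon)b^2$, yielding $ab+c\geq 2(1-\epsilon)b^2\geq(1-\epsilon)(2-\epsilon)b^2$.

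I expect the main obstacle to be identifying this pairing: a naive reverse-triangle inequality on $\|\z\z^H-\x\x^H\|_F$ does not produce a tight enough constant, and no crude $\ell_2$ bound on $\|\e\|$ alone recovers the factor $h_1$. The key insight is that both $\|\z\z^H-\x\x^H\|_F^2$ and $\mathrm{dist}^2$ decompose along the same two geometric axes---the norm mismatch $a-b$ and the angular mismatch $ab-c$---after which the hypothesis $\z\in E(\epsilon)$ controls each scaling factor uniformly and delivers the claimed lower bound.
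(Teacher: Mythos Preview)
Your proof is correct; the upper bound argument is identical to the paper's. For the lower bound, however, you take a genuinely different factorization. The paper writes
\[
a^4+b^4-2c^2=(a^2+c)(a^2-c)+(b^2+c)(b^2-c),
\]
observes that $(a^2-c)+(b^2-c)=\mathrm{dist}^2(\z,\x)$, and then pulls out the minimum of the two multipliers $a^2+c$ and $b^2+c$, which it bounds below by $(1-\epsilon)^2b^2+(1-\epsilon)b^2=(1-\epsilon)(2-\epsilon)b^2$. Your route instead matches the decomposition $\mathrm{dist}^2=(a-b)^2+2(ab-c)$ term-by-term against $\|\z\z^H-\x\x^H\|_F^2=(a-b)^2(a+b)^2+2(ab-c)(ab+c)$. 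The advantage of your pairing is that both weights $(a-b)^2$ and $(ab-c)$ are manifestly nonnegative by Cauchy--Schwarz, so the lower bound follows by a direct coefficient comparison; in the paper's split the individual factors $a^2-c$ and $b^2-c$ can change sign, and the inequality $\alpha_1\beta_1+\alpha_2\beta_2\ge\min(\alpha_1,\alpha_2)(\beta_1+\beta_2)$ only holds after a small case analysis (whichever of $a,b$ is larger forces the corresponding $\beta_i\ge 0$). Both arguments land on the same constant $h_1^2=(1-\epsilon)(2-\epsilon)$ via the same two ingredients $a\ge(1-\epsilon)b$ and $c\ge(1-\epsilon)b^2$, so neither is sharper; yours is just a bit more transparent.
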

\begin{proof}
See Section \ref{sec:Prf3}.
\end{proof}

Lemma \ref{lem:Lemma3} states that the distance between the lifted signals ${\z \z^H}$ and ${\x \x^H}$ is of the rate of the distance of the signals $\z, \x \in \mathbb{C}^N$, when $\z \in \mathit{E}(\epsilon)$ of $\x$. 
Essentially, the distance metric in \eqref{eq:distance} locally tracks the error on the constraint set of rank-1, PSD matrices in the lifted domain. 
The outcome of this result, together with Lemma \ref{lem:Lemma1}, is the following \emph{local restricted isometry}-type property. 

\begin{lemma}\label{lem:Lemma4}
Under the assumptions of Lemmas \ref{lem:Lemma1} and \ref{lem:Lemma3}, for any $\x \in \mathbb{C}^N$ and $\z \in \mathit{E}(\epsilon)$, the lifted forward model satisfies
\begin{align}\label{eq:RIP2}
(1 - \hat{\delta} ) \| {\z \z^H} - {\x \x^H} \|_F^2 &\leq  \frac{1}{M}\| \mathcal{A}({\z} \z^H - {\x} \x^H) \|^2 \\
& \leq (2 + \hat{\delta}) \| {\z}\z^H - {\x} \x^H \|_F^2, \nonumber
\end{align}
where $ \hat{\delta} = \frac{\sqrt{2}(2+\epsilon)}{\sqrt{(1-\epsilon)(2-\epsilon})} \delta$. 
\end{lemma}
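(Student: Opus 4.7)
The plan is to expand $\tfrac{1}{M}\|\mathcal{A}(\z\z^H-\x\x^H)\|^2$ using the operator decomposition of Lemma \ref{lem:Lemma1}, isolate an ``ideal'' Frobenius contribution, and bound the residual $\Delta$-perturbation using the nuclear-norm geometry of rank-at-most-two Hermitian matrices together with Lemma \ref{lem:Lemma3}. Writing $A := \z\z^H - \x\x^H$ and extending $\mathcal{R}$ and $\Delta$ linearly from $\mathit{R}^{+}_1$ to its span (via $\mathcal{R}(B) = \mathrm{tr}(B)\mathbf{I}$ and $\Delta := \tfrac{1}{M}\mathcal{A}^H\mathcal{A} - (\mathcal{I} + \mathcal{R})$), Lemma \ref{lem:Lemma1} yields
\begin{equation*}
\tfrac{1}{M}\|\mathcal{A}(A)\|^2 \;=\; \|A\|_F^2 \;+\; (\|\z\|^2-\|\x\|^2)^2 \;+\; \langle A, \Delta(A)\rangle_F.
\end{equation*}

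For the ideal part, I observe that the two nonzero eigenvalues $\lambda_\pm$ of $A$ have opposite signs (their product equals $|\langle\z,\x\rangle|^2 - \|\z\|^2\|\x\|^2 \leq 0$), so $(\mathrm{tr}\,A)^2 = (\lambda_+ + \lambda_-)^2 \leq \lambda_+^2 + \lambda_-^2 = \|A\|_F^2$, while trivially $(\mathrm{tr}\,A)^2\geq 0$. This sandwiches the first two terms in $[\|A\|_F^2,\; 2\|A\|_F^2]$, yielding the constants $1$ and $2$ in the target inequality.

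For the $\Delta$-term I use the trace-norm / spectral-norm Cauchy--Schwarz $|\langle A, \Delta(A)\rangle_F|\leq \|A\|_{*}\|\Delta(A)\|$, and control the two factors with \emph{different} estimates of $\|A\|_{*}$. For the first factor I use the rank-two inequality $\|A\|_{*} \leq \sqrt{2}\|A\|_F$. For $\|\Delta(A)\|$, I split $A$ as a difference of two rank-$1$ PSD matrices via its spectral decomposition, apply the pointwise bound $\|\Delta(\v\v^H)\|\leq\delta\|\v\|^2$ of Lemma \ref{lem:Lemma1} term by term, and obtain $\|\Delta(A)\|\leq\delta\|A\|_{*}$. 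I then bound $\|A\|_{*}$ in the signal domain by writing $A = \z(\z-\tilde\x)^H+(\z-\tilde\x)\tilde\x^H$ with $\tilde\x=\x e^{\mathrm{i}\Phi(\z)}$ and invoking subadditivity of $\|\cdot\|_{*}$, giving $\|A\|_{*} \leq (\|\z\|+\|\x\|)\,\mathrm{dist}(\z,\x) \leq (2+\epsilon)\|\x\|\,\mathrm{dist}(\z,\x)$ since $\z\in E(\epsilon)$. Combining and then applying the lower bound $\|A\|_F\geq h_1(\delta)\|\x\|\,\mathrm{dist}(\z,\x)$ of Lemma \ref{lem:Lemma3} to replace $\|\x\|\,\mathrm{dist}(\z,\x)$ by $\|A\|_F/h_1(\delta)$ delivers $|\langle A, \Delta(A)\rangle_F|\leq \hat\delta\,\|A\|_F^2$ with the asserted $\hat\delta = \sqrt{2}(2+\epsilon)/\sqrt{(1-\epsilon)(2-\epsilon)}\,\delta$. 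Stacking this with the ideal bounds closes the proof.

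The main obstacle is the correct pairing of estimates in the perturbation step: $\|A\|_{*}$ must play two distinct roles, once as $\sqrt{2}\|A\|_F$ (using the rank structure of $A$) and once as $(2+\epsilon)\|\x\|\,\mathrm{dist}(\z,\x)$ (using the $\epsilon$-neighborhood geometry). Missing either substitution breaks the chain needed to close the argument via the lower half of Lemma \ref{lem:Lemma3}, and the explicit $(2+\epsilon)/\sqrt{(1-\epsilon)(2-\epsilon)}$ factor in $\hat\delta$ emerges precisely from this pairing.
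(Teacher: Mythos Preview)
Your proof is correct and follows essentially the same approach as the paper: expand via Lemma \ref{lem:Lemma1}, sandwich $(\mathrm{tr}\,A)^2=(\|\z\|^2-\|\x\|^2)^2$ between $0$ and $\|A\|_F^2$, and bound the $\Delta$-term through the nuclear/spectral Cauchy--Schwarz pairing together with $\|A\|_*\leq\sqrt{2}\|A\|_F$ and the lower half of Lemma \ref{lem:Lemma3}. The only minor variation is that you control $\|\Delta(A)\|$ via the spectral decomposition of $A$ (invoking only the PSD rank-$1$ bound of Lemma \ref{lem:Lemma1}), whereas the paper uses the split $A=\tilde{\E}+\e\hat{\x}^H+\hat{\x}\e^H$; both routes land on the identical estimate $(2+\epsilon)\|\x\|\,\mathrm{dist}(\z,\x)$ and hence the same $\hat{\delta}$.
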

\begin{proof}
See Section \ref{sec:Prf4}.
\end{proof}


We refer to $\hat{\delta}$ as the {local} restricted isometry constant (RIC) of the lifted forward model over rank-2 matrices. 
However, note that this property holds over a very particular subset of rank-2 matrices even beyond the locality with respect to $\x \in \mathbb{C}^N$. 
Namely, \eqref{eq:RIP2} states that the distance between two elements $\z \z^H, \x \x^H \in \mathit{R}^{+}_1$ is approximately preserved under the mapping of $\mathcal{A}$, if $\z \in E(\epsilon)$ of $\x$. 
The significance of Lemma \ref{lem:Lemma4} is the fact that the restricted isometry property in \eqref{eq:RIP2} is derived as a deterministic consequence of the concentration bound of the spectral matrix. 
The Lemmas \ref{lem:Lemma3} and \ref{lem:Lemma4} culminate to yield the Lipschitz differentiability of the objective function, stated in the following lemma: 

\begin{lemma}\label{lem:Lemma5}
Assume that the assumptions of Lemmas \ref{lem:Lemma3} and \ref{lem:Lemma4} hold. Then, for any $\z \in E(\epsilon)$, the objective function $f$ in \eqref{eq:obj_func} is local Lipschitz differentiable at $\x \in \mathit{P}$ with
\begin{equation}
\| \nabla f(\z) \| \leq c(\delta) \cdot \mathrm{dist} (\z, \x) \| \x \|^2
\end{equation}
where $ c(\delta) = (1+\epsilon)(2+\epsilon) (2+{\delta})$ is the local Lipschitz constant. 
\end{lemma}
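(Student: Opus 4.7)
The plan is to bound $\|\nabla f(\z)\|$ through a duality argument in the lifted domain, followed by Cauchy--Schwarz in $\mathbb{C}^M$ and two invocations of the tight-frame bound in Corollary \ref{cor:Corr3}. Starting from the identity \eqref{eq:GradEq}, I would write $\|\nabla f(\z)\| = \sup_{\|\w\|=1} |\langle \w, \nabla f(\z)\rangle|$ and use the elementary trace identity $\w^H \mathbf{B} \z = \langle \w\z^H, \mathbf{B}\rangle_F$ together with the adjointness of $\mathcal{A}$ to rewrite
\begin{equation*}
\langle \w, \nabla f(\z)\rangle = \frac{1}{M}\langle \mathcal{A}(\w\z^H), \mathcal{A}(\z\z^H - \x\x^H)\rangle.
\end{equation*}
A Cauchy--Schwarz step in $\mathbb{C}^M$ then splits the right-hand side into two nonnegative factors, and the task reduces to bounding each of them separately.

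The first factor $\tfrac{1}{\sqrt M}\|\mathcal{A}(\w\z^H)\|$ involves a rank-1 but non-Hermitian matrix, so Corollary \ref{cor:Corr3} cannot be applied directly. The plan is to circumvent this with a second Cauchy--Schwarz step over the measurement index $m$, using $\sum_m|\a_m^H \mathbf{p}|^4 = \|\mathcal{A}(\mathbf{p}\mathbf{p}^H)\|^2$, yielding
\begin{equation*}
\frac{1}{M}\|\mathcal{A}(\w\z^H)\|^2 = \frac{1}{M}\sum_m |\a_m^H\w|^2|\a_m^H\z|^2 \leq \frac{1}{M}\|\mathcal{A}(\w\w^H)\|\cdot\|\mathcal{A}(\z\z^H)\|.
\end{equation*}
Since $\w\w^H$ and $\z\z^H$ are both rank-1 PSD, Corollary \ref{cor:Corr3} bounds each, giving $\tfrac{1}{\sqrt M}\|\mathcal{A}(\w\z^H)\| \leq \sqrt{2+\delta}\,\|\w\|\,\|\z\|$, which equals $\sqrt{2+\delta}\,\|\z\|$ when $\|\w\|=1$.

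For the second factor $\tfrac{1}{\sqrt M}\|\mathcal{A}(\z\z^H - \x\x^H)\|$, I would first replace $\x$ by $\x e^{\mathrm{i}\Phi(\z)}$, which leaves $\x\x^H$ invariant and permits setting $\d := \z - \x e^{\mathrm{i}\Phi(\z)}$ with $\|\d\| = \mathrm{dist}(\z,\x)$, and then exploit the algebraic identity $\z\z^H - \x\x^H = \z\d^H + \d\x^H$. The triangle inequality together with the rank-1 bound derived above gives
\begin{equation*}
\tfrac{1}{\sqrt M}\|\mathcal{A}(\z\z^H - \x\x^H)\| \leq \sqrt{2+\delta}\,\|\d\|(\|\z\|+\|\x\|) \leq \sqrt{2+\delta}\,(2+\epsilon)\,\mathrm{dist}(\z,\x)\,\|\x\|,
\end{equation*}
where the last step uses $\|\z\| \leq (1+\epsilon)\|\x\|$ from the reverse triangle inequality on $\z\in\mathit{E}(\epsilon)$. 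Multiplying the two factors and invoking $\|\z\|\leq(1+\epsilon)\|\x\|$ once more produces exactly $c(\delta)=(1+\epsilon)(2+\epsilon)(2+\delta)$.

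The main obstacle, in my view, is handling $\mathcal{A}(\w\z^H)$ on the non-PSD rank-1 matrix $\w\z^H$: neither Corollary \ref{cor:Corr3} nor Lemma \ref{lem:Lemma4} applies directly, and substituting Lemma \ref{lem:Lemma4} on the second factor in place of the rank-1 decomposition would only yield the looser constant $\sqrt{(2+\delta)(2+\hat\delta)}$. The two-stage Cauchy--Schwarz reduction to PSD factors is precisely what secures the tight $(2+\delta)$ factor that appears in the claimed Lipschitz constant.
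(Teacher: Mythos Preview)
Your proof is correct and reaches the exact constant $c(\delta)=(1+\epsilon)(2+\epsilon)(2+\delta)$, but the route is genuinely different from the paper's. The paper works directly with the operator decomposition of Lemma~\ref{lem:Lemma1}, writing $\nabla f(\z) = (\tilde{\Z}-\tilde{\X})\z + (\|\z\|^2-\|\x\|^2)\z + \Delta(\tilde{\Z}-\tilde{\X})\z$ and bounding each summand separately: the first two by elementary manipulations and Lemma~\ref{lem:Lemma3}, the last via the eigendecomposition of the rank-2 Hermitian part $\e\hat{\x}^H+\hat{\x}\e^H$ together with the spectral bound $\|\Delta(\v\v^H)\|\leq\delta\|\v\|^2$. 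The three contributions combine as $(2+\epsilon)+(1+\delta)(2+\epsilon)=(2+\delta)(2+\epsilon)$ after factoring out $\|\z\|\leq(1+\epsilon)\|\x\|$. Your argument bypasses the $\mathcal{I}+\mathcal{R}+\Delta$ decomposition entirely: the duality step plus adjointness moves everything into $\mathbb{C}^M$, and the two-stage Cauchy--Schwarz reduction---first in $\mathbb{C}^M$, then over the measurement index to convert $\|\mathcal{A}(\w\z^H)\|$ into a product of PSD terms---lets you invoke only Corollary~\ref{cor:Corr3}. This is more self-contained (neither Lemma~\ref{lem:Lemma3} nor the spectral bound on $\Delta$ is actually needed), and the trick of handling non-PSD rank-1 arguments via $\sum_m|\a_m^H\w|^2|\a_m^H\z|^2\leq\|\mathcal{A}(\w\w^H)\|\,\|\mathcal{A}(\z\z^H)\|$ is a clean device the paper does not exploit. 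The paper's approach, on the other hand, keeps the argument aligned with its central structural theme and makes the role of the perturbation operator $\Delta$ explicit in the Lipschitz constant.
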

\begin{proof}
See Section \ref{sec:Prf5}. 
\end{proof}

Invoking the result of Lemma \ref{lem:Lemma5}, to establish the regularity condition for $f$, it is sufficient to show that for any $\z \in E(\epsilon)$
\begin{equation}
\label{eq:RegCond_new}
\mathrm{Re} \left( \langle \nabla f(\z),  (\z - \x e^{\mathrm{i} \phi(\z)}) \rangle \right) \geq (\frac{1}{\alpha} +  \frac{c^2(\delta) \| \x \|^4}{\beta}) \ \mathrm{dist}^2 (\z, \x)
\end{equation}
which is equivalent to the \emph{restricted strong convexity} of the objective function in $\mathit{E}(\epsilon)$.
By the definition of strong convexity around the closest solution $\hat{\x} = \mathrm{e}^{j \Phi(\z)} \x $ to an estimate $\z \in \mathit{E}(\epsilon)$, this condition, is implied if the objective function satisfies
\begin{equation}\label{eq:RestStongConv}
f(\z) \geq f(\hat{\x}) + \mathrm{Re}\left( \nabla f (\hat{\x})^H (\z - \hat{\x}) \right) + \frac{L}{2} \| \z - \hat{\x} \|^2,
\end{equation}
where $L$ equals to the multiplier of the distance term in \eqref{eq:RegCond_new}. 
Having $ f(\hat{\x})$ and $\nabla f (\hat{\x})$ equal $0$ by definition, the restricted strong convexity in $\mathit{E}(\epsilon)$ is simply reduced to the following condition:
\begin{equation}\label{eq:RegConditionFinal}
{f}(\z) \geq \frac{1}{2}\left({\frac{1}{\alpha} +  \frac{c^2(\delta)  \| \x \|^4}{\beta}}\right)  \mathrm{dist}^2 (\z, \x),
\end{equation}
for any $\z \in \mathit{E}(\epsilon)$.
Writing $f$ explicitly in terms of the lifted signals as $f(\z) = \| \mathcal{A} ({\z \z^H}- {\x \x^H})  \|^2 / 2M$, and applying the lower bounds from Lemmas \ref{lem:Lemma3} and \ref{lem:Lemma4}, we have
\begin{equation}
f(\z) \geq \frac{(1-\hat{\delta}) h_1^2(\delta)}{2}  \mathrm{dist}^2 (\z, \x) \| \x \|^2.
\end{equation}
Thus, the regularity condition is satisfied by setting $\alpha$ and $\beta$ such that $\alpha \beta > 4$, and 
\begin{equation}\label{eq:RegCond_Final}
\frac{1}{\alpha \| \x \|^2} + \frac{c^2(\delta) \| \x \|^2}{\beta} \leq (1-\hat{\delta}) h_1^2(\delta) : = h(\delta).
\end{equation}

The final form we derive in \eqref{eq:RegCond_Final} results in a number of notable outcomes regarding the non-convex optimization theory of the WF framework:

$1)$ We show that there exists a regime in which the regularity condition holds by default.
This regime is characterized by the concentration bound of the spectral matrix in \eqref{eq:WF_init}, as $\hat{\delta}$ is solely a function of $\delta$. 
Observe that the validity of this regime depends on attaining $\hat{\delta} < 1$ which constrains the tightness of the concentration property in \eqref{eq:WF_init}. 
This numerically yields an upper bound of $\delta \leq 0.184$ as shown in Figure \ref{fig:1}.

$2)$ \eqref{eq:RegCond_Final} provides an interpretation of the algorithm parameters consistent with the original work of \cite{candes2015phase}:
Figure \ref{fig:2} demonstrates the range of values the constants $c$ and $h$ can attain in the valid region of $\delta$. 
Notably, the values of these $\mathcal{O}(1)$ constants characterize the convergence rate of the algorithm, as $\alpha$ and $\beta$ are required to be sufficiently large constants for \eqref{eq:RegCond_Final} to hold. 
Observe that \eqref{eq:RegCond_Final} implies setting $\alpha = \mathcal{O}({1}/{\| \x \|^2})$, and $\beta = \mathcal{O}(\| \x \|^2)$, hence $\alpha \beta = \mathcal{O}(1)$. 
Since we clearly have $h < 2$, and $c > 4$, the condition in \eqref{eq:RegCond_Final} holds with $\alpha \beta > 4$ by definition. 
Hence, the regularity condition is satisfied, and a step size $\mu' \leq {2}/{\beta}$ can be chosen to yield a convergence rate of ${2\mu'}/{\alpha}$ via \cite{candes2015phase}.
This step size $\mu'$ is then $\mathcal{O}({1}/{\| \x \|^2})$. 
Hence, the definition of the updates in \eqref{eq:WF_Updates} requires an approximate normalization term\footnote{The approximation argument can be followed from \cite{candes2015phase} in the proof of Lemma 7.10.} of $\| \z_0 \|^2$ on a scalar entity $\mu_{k} = \mathcal{O}(1)$. 

\begin{figure}
\centering
\includegraphics[scale=0.3]{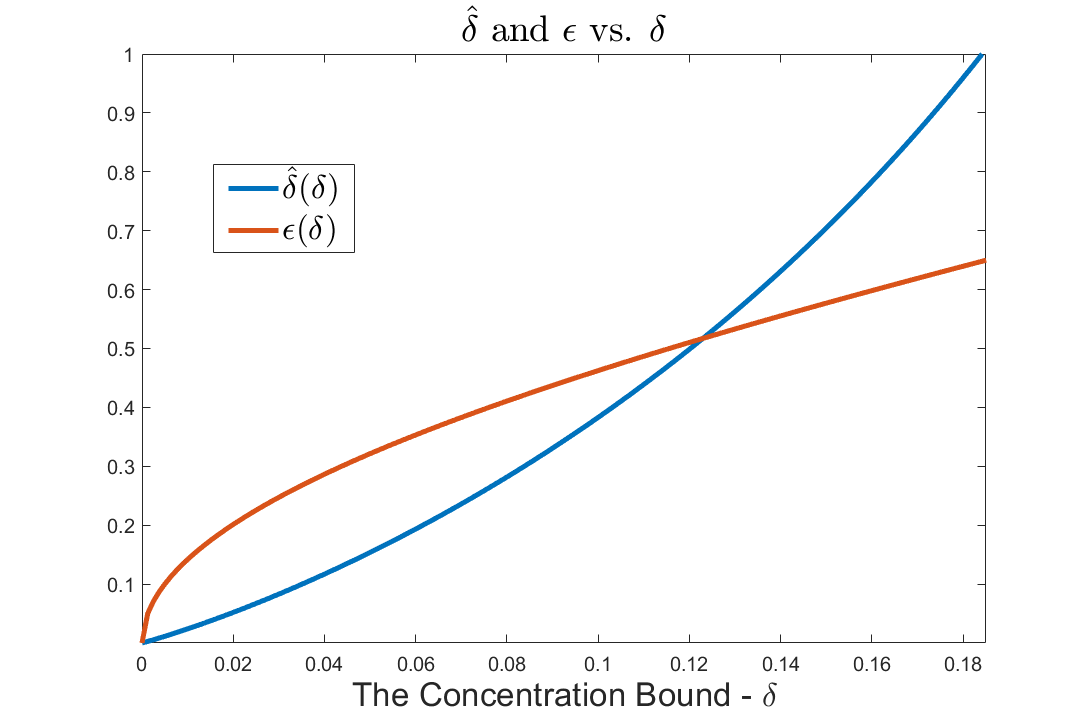}
\caption{\emph{The $\hat{\delta}$ and $\epsilon$ values in the region that the regularity condition is guaranteed to hold.} Observe that the limiting value is the local restricted isometry constant, $\hat{\delta}$, which controls the uniqueness property in the lifted problem locally.}
\label{fig:1}
\end{figure}

\begin{figure}
\centering
\includegraphics[scale=0.3]{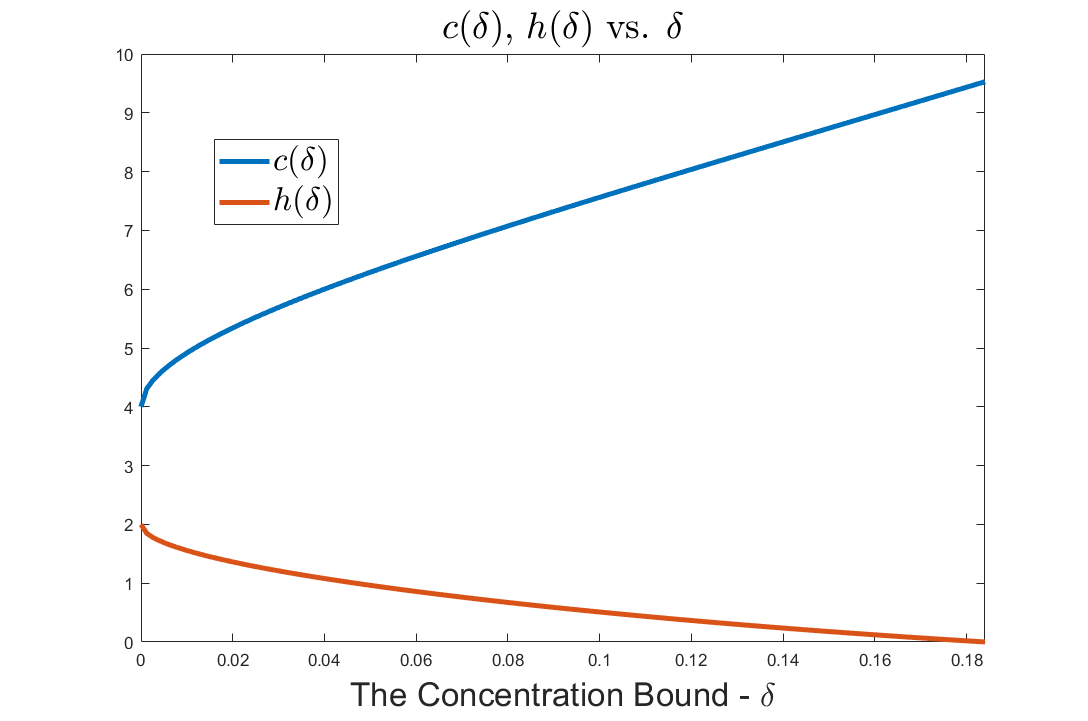}
\caption{\emph{The $c$ and $h$ values in the region that the regularity condition is guaranteed to hold.} The two constants characterize the values of $\alpha$ and $\beta$ parameters for the convergence of the iterates generated by WF updates.}
\label{fig:2}
\end{figure}

$3)$ The condition in \eqref{eq:RegCond_Final} effectively places an upper bound on the convergence rate of the algorithm under which exact recovery of any $\x \in \mathbb{C}^N$ is guaranteed. 
Simply fixing $\kappa = \alpha \beta$ and re-organizing \eqref{eq:RegCond_Final}, we obtain the following:
\begin{equation}\label{eq:Conv_Bound_1}
\frac{1}{\alpha \| \x \|^2} \left( 1 + \frac{c^2(\delta) \| \x \|^4}{\kappa} \alpha^2 - h(\delta) \| \x \|^2 \alpha \right) \leq 0. 
\end{equation}
Since we have $\alpha > 0$ by definition and $h(\delta) > 0$ by constraining $\delta$, it suffices to consider the non-negativity of the discriminant of the quadratic equation with respect to $\alpha$ in \eqref{eq:Conv_Bound_1} for the overall condition to hold, which yields
\begin{equation}
h^2(\delta) \| \x \|^4 - \frac{4}{\kappa} c^2(\delta) \| \x \|^4 \geq 0. 
\end{equation}
As a result, knowing that $4/\kappa$ is an upper bound on the rate of convergence of the algorithm via \cite{candes2015phase}, we obtain the best achievable geometric convergence rate as a function of the concentration bound parameter as
\begin{equation}\label{eq:ab}
\frac{4}{\alpha \beta} \leq \frac{h^2(\delta)}{c^2(\delta)} := r(\delta) = \left(\frac{(1-\hat{\delta})(1-\epsilon)(2-\epsilon)}{(2+{\delta})(1+\epsilon)(2+\epsilon)} \right)^2.
\end{equation}
Thereby, beyond directly guaranteeing the existence of a pair of $(\alpha, \beta)$ to satisfy the regularity condition when sufficiently small, the $\delta$-value fully characterizes the practicality and iteration complexity of the algorithm via $r(\delta)$ in \eqref{eq:ab}. 
Note that $r(\delta)$ is the best convergence rate the algorithm can achieve uniformly over all $\x \in \mathbb{C}^N$.
 
\begin{figure}
\centering
\includegraphics[scale=0.3]{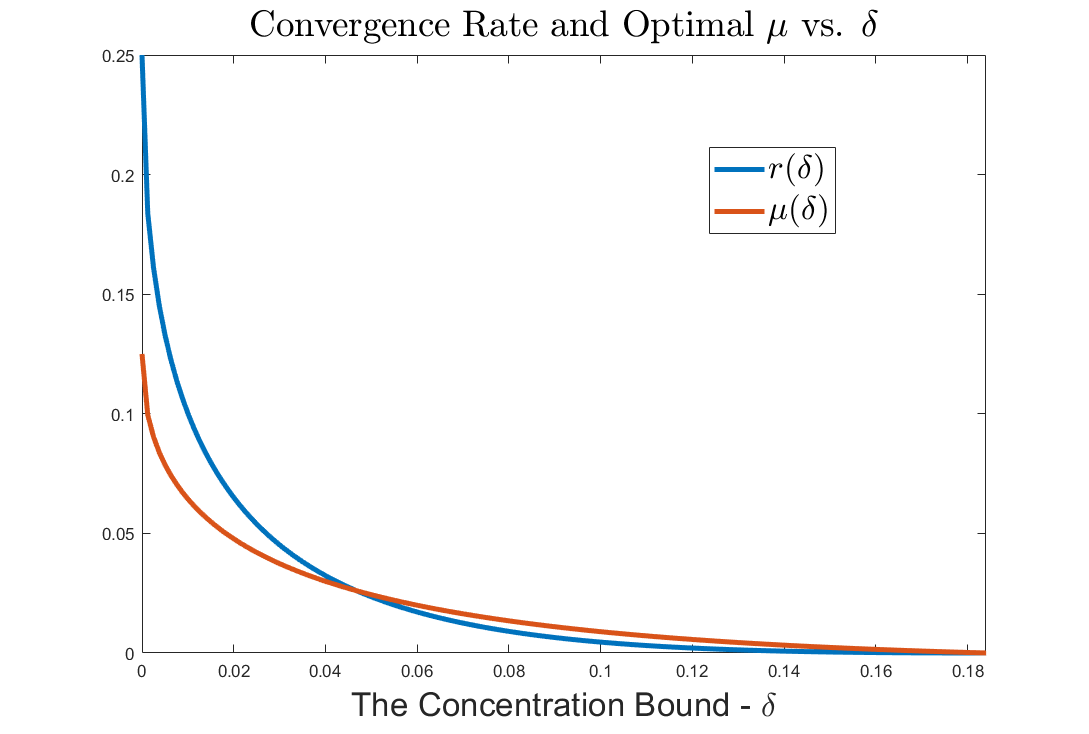}
\caption{\emph{The $r(\delta)$ upper bound on rate of converge and the optimal step size $\mu(\delta)$ with respect to the concentration bound parameter $\delta$ in the exact recovery regime of our framework.} The optimal step-size  is defined as the $\mu$ that satisfies the upper bound on the rate-of convergence.}
\label{fig:3}
\end{figure}

Organizing the arguments developed in this section, we state the following for exact phase retrieval via Wirtinger Flow:
\begin{theorem}\label{thm:Theorem1}
Assume that the concentration property
$$
\| \frac{1}{M} \mathcal{A}^H \mathcal{A}({\x} \x^H)  - ( \mathbf{x} \mathbf{x}^H + \| \mathbf{x} \|^2 \mathbf{I} ) \| \leq \delta \| \x \|^2. 
$$
holds for all $\x \in \mathbb{C}^N$ with $\delta \leq 0.184$. 
Then, the initial estimate $\z_0$ obtained from the spectral matrix in \eqref{eq:spectral} using the scaling factor in \eqref{eq:normest} satisfies
$$
\mathrm{dist}^2(\z_0, \x)  \leq \epsilon^2 \| \x \|^2, 
$$
for all $\x$, where $\epsilon^2 \leq 1 + \sqrt{1 + \frac{\delta}{2}} - 2 \sqrt{(1 - 2 \delta) {\left(1 + \frac{\delta}{2}\right)^{1/2} }}$, and for the iterates generated by \eqref{eq:WF_Updates} with a fixed step size of ${\mu}/{\| \z_0 \|^2} \leq 2/\beta$ we have that
$$
\mathrm{dist}^2(\z_k, \x) \leq  \epsilon^2 (1 - \frac{2 \mu}{\alpha})^k \| \x \|^2,
$$
with the best achievable convergence rate of
$$
\frac{2\mu}{\alpha} \leq \frac{h^2(\delta)}{c^2(\delta)} := \left(\frac{(1-\hat{\delta})(1-\epsilon)(2-\epsilon)}{(2+{\delta})(1+\epsilon)(2+\epsilon)} \right)^2,
$$
for the recovery of any $\x \in \mathbb{C}^N$, where $\hat{\delta} = \frac{\sqrt{2}(2+\epsilon)}{\sqrt{(1-\epsilon)(2-\epsilon})} \delta$.
\end{theorem}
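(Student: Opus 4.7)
The plan is to assemble the theorem from the chain of lemmas already proved, and then extract the convergence rate by a quadratic-discriminant argument in the regularity constants. The initialization part is essentially immediate: Lemma \ref{lem:Lemma1} combined with Corollary \ref{cor:Corr3} justifies using $\lambda_0 = (2M)^{-1/2}\|\y\|$ as the norm estimator, and Lemma \ref{lem:Lemma2} then directly certifies $\mathrm{dist}^2(\z_0,\x) \leq \epsilon^2 \|\x\|^2$ with $\epsilon$ as stated. So the first step of the proof is just to invoke these results and observe that $\epsilon<1$ in the admissible $\delta$-range (which must itself be tracked, since the formula for $\epsilon$ is only meaningful once $\delta<1/2$, and tighter $\epsilon<1$ is needed to apply Lemma \ref{lem:Lemma3}).

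Next I would establish the regularity condition on $E(\epsilon)$. Writing $f(\z)=\frac{1}{2M}\|\mathcal{A}(\z\z^H-\x\x^H)\|^2$, the lower bound of Lemma \ref{lem:Lemma4} coupled with the lower bound of Lemma \ref{lem:Lemma3} yields $f(\z)\geq \tfrac{1}{2}(1-\hat\delta)\,h_1^2(\delta)\,\mathrm{dist}^2(\z,\x)\|\x\|^2$ for every $\z\in E(\epsilon)$, which is the restricted strong convexity bound \eqref{eq:RegConditionFinal}. Combining this with the Lipschitz differentiability $\|\nabla f(\z)\|\leq c(\delta)\,\mathrm{dist}(\z,\x)\|\x\|^2$ from Lemma \ref{lem:Lemma5}, the condition \eqref{eq:RegCond_Final} on $(\alpha,\beta)$ guarantees that Condition \ref{con:RegCon} is satisfied; one just needs to check that $\delta\leq 0.184$ forces $\hat\delta<1$, so that $h(\delta)>0$ and a feasible $(\alpha,\beta)$ pair with $\alpha\beta>4$ exists (this is the numerical threshold, read off Figure \ref{fig:1}).

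With the regularity condition in force, the standard contraction argument of \cite[Lemma 7.10]{candes2015phase} applies verbatim: picking step size $\mu/\|\z_0\|^2\leq 2/\beta$ gives $\mathrm{dist}^2(\z_{k+1},\x)\leq (1-2\mu/\alpha)\,\mathrm{dist}^2(\z_k,\x)$, and inductively $\mathrm{dist}^2(\z_k,\x)\leq(1-2\mu/\alpha)^k\epsilon^2\|\x\|^2$, which also guarantees $\z_k\in E(\epsilon)$ for every $k$ so that the regularity condition may be reapplied. The last step is the optimization of the convergence rate: setting $\kappa=\alpha\beta$ and rearranging \eqref{eq:RegCond_Final} as the quadratic inequality \eqref{eq:Conv_Bound_1} in $\alpha$, feasibility requires a non-negative discriminant, which reduces to $4/\kappa\leq h^2(\delta)/c^2(\delta)$. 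Since $2\mu/\alpha\leq 4/(\alpha\beta)=4/\kappa$, this upper-bounds the attainable contraction factor by $r(\delta)=h^2(\delta)/c^2(\delta)$, which is exactly the claimed best rate.

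The main technical obstacle I expect is bookkeeping on the admissible region rather than any new estimate: one must verify that $\delta\leq 0.184$ simultaneously makes (i) the spectral-initialization $\epsilon$ in \eqref{eq:epsilonval} well-defined and strictly less than $1$, (ii) the local RIC $\hat\delta$ in Lemma \ref{lem:Lemma4} strictly less than $1$ so $h(\delta)>0$, and (iii) the discriminant in the quadratic argument non-negative with $h(\delta)/c(\delta)$ large enough that some $(\alpha,\beta)$ satisfies both $\alpha\beta>4$ and \eqref{eq:RegCond_Final}. Once all three are checked (numerically via the plots in Figures \ref{fig:1}--\ref{fig:3}), the result assembles directly from the previously established lemmas with no additional estimation.
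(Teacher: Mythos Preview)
Your proposal is correct and follows essentially the same route as the paper: invoke Lemmas \ref{lem:Lemma1}--\ref{lem:Lemma2} and Corollary \ref{cor:Corr3} for the initialization bound, combine the lower bounds of Lemmas \ref{lem:Lemma3}--\ref{lem:Lemma4} with the Lipschitz estimate of Lemma \ref{lem:Lemma5} to reduce the regularity condition to \eqref{eq:RegCond_Final}, verify numerically that $\delta\leq 0.184$ gives $\hat\delta<1$, apply \cite[Lemma~7.10]{candes2015phase} for the contraction, and finish with the quadratic-discriminant argument in $\alpha$ to extract the optimal rate $r(\delta)=h^2(\delta)/c^2(\delta)$. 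The only minor addition the paper makes beyond your outline is the explicit observation that $h(\delta)<2$ and $c(\delta)>4$ automatically force $\alpha\beta>4$ whenever \eqref{eq:RegCond_Final} holds, so the constraint in Condition~\ref{con:RegCon} is never binding.
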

\begin{remark}
For a deterministic model, \eqref{eq:WF_init} to hold for an unspecified, arbitrary $\x \in \mathbb{C}^N$ is equivalent to holding over all $\x \in \mathbb{C}^N$ \cite{yonel2020exact}. However in the probabilistic setting, the probability that \eqref{eq:WF_init} holds uniformly over all $\x \in \mathbb{C}^N$ is naturally more stringent than it is for an arbitrary $\x$. In Appendix \ref{sec:App}, we provide the uniformity of \eqref{eq:WF_init} for the Gaussian sampling model when $M = \mathcal{O}(N \log N)$.
\end{remark}

Complementing our performance guarantees in Theorem \ref{thm:Theorem1}, we can further derive an optimal step-size that achieves our upper bound on the rate of convergence in \eqref{eq:ab}. 
Knowing that $\beta = \beta' \| \x \|^2, \alpha = \alpha'/ \| \x \|^2$ with $\mu_{k} = \mu \leq 2/\beta'$ defined as the largest step-size allowed, and using \eqref{eq:RegCond_Final} we have that: 
\begin{equation}
\frac{2\left(h(\delta) - 1/\alpha'\right)}{c^2(\delta)} = \frac{r(\delta)\alpha'}{2},
\end{equation}
when $2 \mu/\alpha$ attains $r(\delta)$ with equality. 
Solving for $\alpha'$ yields $\alpha' = 2/h(\delta)$, which translates to a constant step-size of $\mu(\delta) = h(\delta)/c^2(\delta)$. 
Hence the WF iterations provably converge to a true solution at a constant step-size $\mu_k = \mu = \mathcal{O}(1)$ with $\mu > 0$ as long as $\delta \leq 0.184$ in \eqref{eq:WF_init}. 

The evolution of the optimal convergence rate $r(\delta)$ and the step-size $\mu(\delta)$ within our exact recovery regime is provided in Figure \ref{fig:3}.
The characterization of the convergence rate and the optimal step-size with respect to $\delta$ highlights a key contribution of our deterministic framework in Theorem \ref{thm:Theorem1}. 
Specifically considering the Gaussian model at a fixed number of samples at the proper complexity, picking a smaller $\delta$ is equivalent to relaxing the probability of success for exact recovery.
This in turn indicates moving up on the $\mu(\delta)$ curve such that the algorithm is ran with a larger step-size corresponding to the smaller $\delta$ value. 
Therefore in practical terms, the behavior in Figure \ref{fig:3} quantifies the trade-off between rate of convergence of the algorithm, and the probability of 
exactly recovering the ground truth $\x \in \mathbb{C}^N$. 
Alternatively an analogous phenomenon characterizes the trade-off between the number of samples and the algorithm performance, for a desired fixed probability of success. 
As a result, we establish an explicit relationship between the step-size chosen in practice, and the underlying model parameters such as the number of samples, and the probability of success for non-convex phase retrieval from intensity measurements via WF.

Note that the uniformity of \eqref{eq:WF_init} in the complex Gaussian sampling model was  shown
when $M = \mathcal{O}(N)$ in \cite{chen2017solving} via sample truncation, however not as a sufficient condition for exact recovery in the manner we established in Theorem \ref{thm:Theorem1}. 
Our framework essentially captures the impact of such a sample truncation scheme through its relation to the properties of the lifted forward model. 
In this sense, one could consider Theorem \ref{thm:Theorem1} as an abstraction of the  theoretical guarantees in \cite{chen2017solving}, where we have identified a novel, more minimal sufficient condition for arbitrary lifted forward models.
As a result of such an abstraction, we extend the favorable properties attained in \cite{chen2017solving} such as exact recovery with an $\mathcal{O}(1)$ step-size and the linear computational complexity to a more general problem setting which includes that of the original WF.
It can also be observed that the numerical cases evaluated in \cite{candes2015phase} for the Gaussian model are well within the range of values we identify for the validity of our Theorem \ref{thm:Theorem1}. 
Hence, the theoretical means developed for Theorem \ref{thm:Theorem1} are consistent with the convergence behavior demonstrated in \cite{candes2015phase}.

In establishing Theorem \ref{thm:Theorem1} for exact phase retrieval, we necessarily used the specific structure of the diagonal bias term in the expectation of the spectral matrix. 
This is in contrast to our work in \cite{bariscan2018}, in which the spectral matrix is an unbiased estimator of the lifted signal. 
Nonetheless, Corollary \ref{cor:Corr3} highlights a key advantage of the non-convex framework of WF.
Via the removal of convex relaxations and solving the perturbed problem in the lifted domain over the set of rank-1, PSD matrices, the RIP-type properties required by semi-definite programming and lifting-based approaches are relaxed to smaller, more specific domain of matrices. 
Under the lens of LRMR theory, WF not only offers computational advantages, but also less stringent theoretical means to achieve exact recovery if the step-size is properly controlled. 
The deterministic and less stringent nature of our recovery guarantees also opens up promising possibilities for the study of more structured models for problems such as wave-based imaging, where estimates on $\delta$ value would relate to parameters such as bandwidth, central frequency, or resolution \cite{Chai11, yonel2020exact}. 

Overall, we further stress a few notable outcomes of Theorem \ref{thm:Theorem1}, which include the following observations: 
\begin{itemize}
\item Via the established deterministic convergence framework given the concentration bound, our result proves that the restricted strong convexity property of the objective function is achieved with $\mathcal{O}(N \log N)$ samples for the complex Gaussian sampling model. 
This is a $\log N$ factor less than the sample complexity reported in \cite{sanghavi2017local}. 
\item Our sufficient condition has to hold only over the rank-1, PSD matrices, which is less stringent than those studied in the non-convex LRMR literature. 
Additionally, a universal upper bound on the relative distance-$\epsilon$ via Figure \ref{fig:1} is attained within our exact recovery regime. 
Hence, the concept of {sufficient accuracy} of the spectral initialization is captured by a quantitative measure. 
\item 
Another observation is that the upper bound on the concentration property of the spectral matrix in the phase retrieval problem requires a smaller constant than the one in the \emph{interferometric inversion problem} we studied in \cite{bariscan2018} ($0.184$ as opposed to $0.214$), in which the {relative phase information} of a pair of measurements is retained. 
This is indeed an intuitive outcome, as more information is lost when measurements are phaseless, compared to the interferometric case.  
A similar outcome is observed in the upper bound obtained for the geometric convergence rate of the algorithm, which approaches to $0$ for the case of interferometric inversion as $\delta \rightarrow 0$.  
As a result, the impact of the additional loss of phase information is directly captured in the sufficient conditions and the performance guarantees of the algorithm in solving the different types of quadratic systems of equations. 
\end{itemize}


\section{Robustness}\label{sec:Rob}
In this section, we assess the robustness of the WF algorithm in the presence of additive noise in the measurements.
We show that for the general problem setting of
\begin{equation}\label{eq:NoisyMeas}
\mathbf{y} = \mathcal{A}({\x \x^H}) + \bfeta,
\end{equation}
the results presented in Theorem \ref{thm:Theorem1} for the noise-free case in Section \ref{sec:Results} are attained upto a bounded perturbation for $\mathbb{E}[{\bfeta}] = \mathbf{0}$. {It should be noted that, the $\ell_2$ mismatch function minimized in the problem formulation fits the data for i.i.d. additive white Gaussian noise model $\{\eta_m\}_{m=1}^M$ in the maximum likelihood sense. 
Despite this, the results presented in this section have no specification on the distribution of the noise term $\bfeta$, similar to those of \cite{chen2017solving}, which were derived for the Poisson loss function.}

Our first goal is to establish the validity of the spectral initialization for our exact recovery guarantees with respect to the SNR of measurements in \eqref{eq:NoisyMeas} by utilizing the arguments developed over the lifted domain. 
Namely, for our subsequent convergence theory to hold, we derived numerical constraints on both the concentration bound (i.e., $\delta$), and the distance of the initial estimate obtained from the spectral method (i.e., $\epsilon$).  
These constraints characterize the amount of perturbation the algorithm can tolerate, which is stated in the following lemma and shown in Figure \ref{fig:4}. 

\begin{lemma}\label{lem:Lemma6}
Consider the spectral matrix formed by \eqref{eq:spectral}, using the noisy measurements in \eqref{eq:NoisyMeas}. Moreover, let the concentration bound in \eqref{eq:WF_init} hold as stated in the setup of Theorem \ref{thm:Theorem1}. Then, the spectral matrix $\mathbf{Y}$ satisfies,
\begin{equation}\label{eq:CBoundNoise}
\mathbb{E}_{\bfeta} \left[ \| \mathbf{Y} - ({\x} \x^H + \| \x \|^2 \mathbf{I}) \| \right] \leq \left( \delta + \frac{(2+\delta)}{\sqrt{\mathrm{SNR}}}\right) \| \x \|^2,
\end{equation}
where $\mathrm{SNR}$ stands for signal-to-noise-ratio, defined as $\mathrm{SNR} = {\| \mathcal{A}({\x \x^H}) \|^2}/\mathbb{E} [\| \bfeta \|^2]$. 
\end{lemma}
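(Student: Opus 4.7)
The plan is to decompose the noisy spectral matrix into its noise-free part plus a noise-induced perturbation, then bound each contribution separately. Writing $\mathbf{Y} = \frac{1}{M}\mathcal{A}^H(\mathbf{y}) = \frac{1}{M}\mathcal{A}^H\mathcal{A}(\x\x^H) + \frac{1}{M}\mathcal{A}^H(\bfeta)$ and inserting $\pm(\x\x^H + \|\x\|^2\mathbf{I})$ gives
\begin{equation*}
\mathbf{Y} - (\x\x^H + \|\x\|^2\mathbf{I}) = \Bigl[\tfrac{1}{M}\mathcal{A}^H\mathcal{A}(\x\x^H) - (\x\x^H + \|\x\|^2\mathbf{I})\Bigr] + \tfrac{1}{M}\mathcal{A}^H(\bfeta).
\end{equation*}
Taking spectral norms and applying the triangle inequality, the first bracket is controlled directly by the concentration hypothesis \eqref{eq:WF_init} by $\delta\|\x\|^2$, so it remains to bound $\mathbb{E}_{\bfeta}\bigl[\tfrac{1}{M}\|\mathcal{A}^H(\bfeta)\|\bigr]$ by the $(2+\delta)/\sqrt{\mathrm{SNR}}\cdot\|\x\|^2$ term.

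For this residual, the main idea is to exploit the tight frame property of $\mathcal{A}$ over rank-1 PSD matrices established in Corollary \ref{cor:Corr3}. Since $\mathcal{A}^H(\bfeta) = \sum_m \eta_m \a_m\a_m^H$ is Hermitian, the variational characterization gives
\begin{equation*}
\|\mathcal{A}^H(\bfeta)\| = \sup_{\|\mathbf{u}\|=1} |\mathbf{u}^H \mathcal{A}^H(\bfeta)\mathbf{u}| = \sup_{\|\mathbf{u}\|=1} |\langle \mathcal{A}(\mathbf{u}\mathbf{u}^H), \bfeta\rangle|.
\end{equation*}
Applying Cauchy-Schwarz inside the supremum and then invoking the upper tight-frame bound $\|\mathcal{A}(\mathbf{u}\mathbf{u}^H)\|^2 \leq M(2+\delta)\|\mathbf{u}\mathbf{u}^H\|_F^2 = M(2+\delta)$ yields the deterministic estimate $\tfrac{1}{M}\|\mathcal{A}^H(\bfeta)\| \leq \sqrt{(2+\delta)/M}\,\|\bfeta\|$.

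The final step is to take expectations and convert $\mathbb{E}[\|\bfeta\|]$ into an SNR-type quantity. By Jensen's inequality $\mathbb{E}[\|\bfeta\|] \leq \sqrt{\mathbb{E}[\|\bfeta\|^2]}$, and the SNR definition gives $\mathbb{E}[\|\bfeta\|^2] = \|\mathcal{A}(\x\x^H)\|^2/\mathrm{SNR}$. Using the upper tight-frame bound once more on the signal lift, $\|\mathcal{A}(\x\x^H)\|^2 \leq M(2+\delta)\|\x\|^4$, the two $\sqrt{M(2+\delta)}$ factors compose to produce exactly the factor $(2+\delta)$ in the claimed bound, with $\sqrt{\mathrm{SNR}}$ in the denominator and $\|\x\|^2$ from the signal norm. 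Summing with the deterministic $\delta\|\x\|^2$ term completes the claim.

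I expect no serious obstacle here; the only subtle point is that the noise-perturbation term must be bounded in operator norm rather than Frobenius norm, so it is essential to pass through the variational characterization of the spectral norm and to apply the RIP-type inequality of Corollary \ref{cor:Corr3} on rank-1 test matrices $\mathbf{u}\mathbf{u}^H$. Notably the argument makes no distributional assumption on $\bfeta$ beyond the existence of $\mathbb{E}[\|\bfeta\|^2]$, which is consistent with the distribution-free character claimed immediately above the lemma.
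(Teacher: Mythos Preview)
Your proposal is correct and follows essentially the same route as the paper: decompose $\mathbf{Y}$ into the noise-free backprojection plus $\tfrac{1}{M}\mathcal{A}^H(\bfeta)$, apply \eqref{eq:WF_init} to the first term, bound the second via the variational characterization of the spectral norm, Cauchy--Schwarz, and two applications of the tight-frame bound from Corollary~\ref{cor:Corr3}, and finish with Jensen's inequality and the SNR definition. The only cosmetic difference is that the paper inserts a factor $\|\mathcal{A}(\tilde{\X})\|/\|\mathcal{A}(\tilde{\X})\|$ before taking the expectation, whereas you first bound $\mathbb{E}[\|\bfeta\|]$ and then substitute; the arithmetic is identical.
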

\begin{proof}
See Section \ref{sec:Prf6}.
\end{proof}

\begin{figure}
\centering
\includegraphics[scale=0.3]{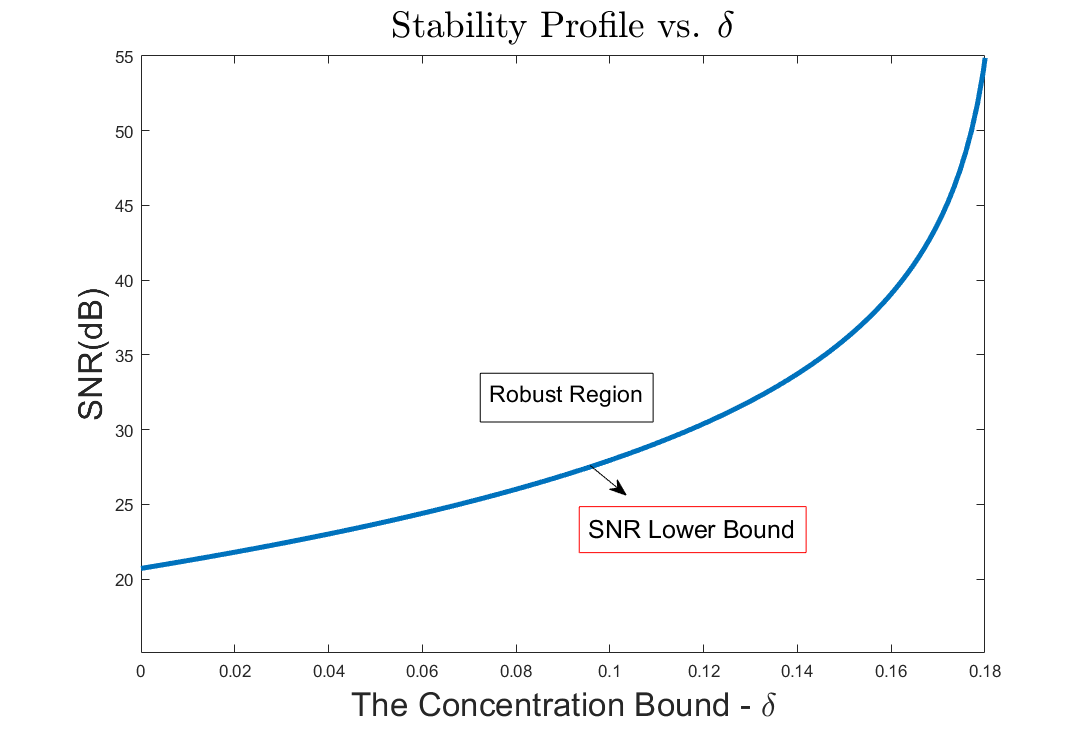}
\caption{\emph{The SNR lower bound required for our convergence framework to hold.} The truncation region refers to cases in which spectral initialization violates the sufficient conditions for Theorem \ref{thm:Thm2}. In this case, pre-processing techniques or other initialization schemes must be pursued.}
\label{fig:4}
\end{figure}

Analogous to the one-to-one relationship of the $\epsilon$-distance of the spectral initialization and the concentration bound $\delta$ in the noise-free phase retrieval problem, the noisy scenario has the additional dependence on the SNR of the measurements through $\tilde{\delta} := \delta + {(2+\delta)}/{\sqrt{\mathrm{SNR}}}.$
With the presence of the SNR term, there exists a level of noise as a function of $\delta$ beyond which the concentration bound in \eqref{eq:CBoundNoise} is insufficient to guarantee an effective spectral initialization. 
This restriction is directly determined by two constraints: $\epsilon < 1$ and $\hat{\delta} < 1$, in order to retain a valid regime where convergence arguments from Theorem \ref{thm:Theorem1} hold true for the noise-free component of the gradients. 
Thereby, we obtain a region over the ($\delta,\mathrm{SNR}$) domain such that the spectral method produces a valid estimator under $\tilde{\delta}$.  

A direct manner to determine this region is by enforcing $\tilde{\delta} \leq 0.184$, through which accuracy of spectral initialization and subsequent arguments within the $\mathit{E}(\epsilon)$ are preserved, yielding an SNR lower bound of  
\begin{equation}\label{eq:Sec3_5_eq2}
\mathrm{SNR(dB)} \geq 20 \log \frac{2+\delta}{0.184 - \delta}.
\end{equation}
Figure \ref{fig:4} depicts that the lowest SNR value of $20.7 \mathrm{dB}$ is attained at $\delta = 0$. 
Although it is derived in a straight-forward manner, \eqref{eq:Sec3_5_eq2} is the best lower bound that can be characterized by our framework. 
This is rather surprising at first glance, since the $\hat{\delta}$ term is only affected by the perturbation resulting from noise through the $\epsilon$ parameter, as $\delta$ and $\hat{\delta}$ are properties of the underlying lifted forward model $\mathcal{A}$, which is independent of noise. 
However, these still prove to be consequential for the stability of the algorithm under additive noise because of constraints that arise from the convergence arguments, beyond those related to the validity of the spectral initialization. 

In particular, under additive noise and the assumptions of Lemma \ref{lem:Lemma6}, within the valid region for the spectral initialization defined by \eqref{eq:Sec3_5_eq2}, the convergence guarantees of WF are perturbed by a constant factor that is a function of SNR. 
\begin{theorem}\label{thm:Thm2}
Assume that the assumptions of Lemma \ref{lem:Lemma6} hold. Then, for the identical procedure and values of constants $\alpha, \beta$ in the setup of Theorem \ref{thm:Theorem1}, we have
\begin{equation}\label{eq:thm2eq}
\mathbb{E}_{\bfeta} \left[\mathrm{dist}(\z_k, \x) \right] \lessapprox  \epsilon (1 - \frac{2 \mu}{\alpha})^{\frac{k}{2}} \| \x \| + {\alpha'} \frac{(2+\delta)}{\sqrt{\mathrm{SNR}}} \| \x \|,
\end{equation}
where $\mu = \mu_k/ \| \z_0 \|^2 \leq 2/\beta$, $\alpha' = \mathcal{O}(1)$ such that $\alpha = \alpha' / \| \z_0 \|^2$, and $\epsilon^2 \leq 1 + \sqrt{1 + \frac{\tilde{\delta}}{2}} - 2 \sqrt{(1 - 2 \tilde{\delta}) {\left(1 + \frac{\tilde{\delta}}{2}\right)^{1/2} }}$, with $\tilde{\delta} = \delta + \frac{(2+\delta)}{\sqrt{\mathrm{SNR}}}$. 
\end{theorem}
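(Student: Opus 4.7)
The plan is to piggyback on the convergence machinery established for Theorem \ref{thm:Theorem1} by treating the measurement noise as an additive perturbation to the gradient. First, I would combine Lemma \ref{lem:Lemma6} with Lemma \ref{lem:Lemma2}: since in expectation the spectral matrix concentrates around $\x \x^H + \|\x\|^2 \mathbf{I}$ at tightness $\tilde{\delta} = \delta + (2+\delta)/\sqrt{\mathrm{SNR}}$, the same computation that produced \eqref{eq:epsilonval} yields $\mathrm{dist}^2(\z_0, \x) \leq \epsilon^2 \|\x\|^2$ with $\epsilon$ now defined through $\tilde{\delta}$. The SNR restriction \eqref{eq:Sec3_5_eq2} forces $\tilde{\delta} \leq 0.184$, so the $\epsilon$-neighborhood remains inside the regime where Theorem \ref{thm:Theorem1}'s arguments are valid.

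Next, I would decompose the gradient computed from noisy data as $\nabla f(\z) = \nabla f_0(\z) - \tfrac{1}{M}\mathcal{A}^H(\bfeta)\,\z$, where $\nabla f_0$ denotes the noise-free gradient from $\mathcal{A}(\x \x^H)$. Writing the WF update as $\z_{k+1} - \hat{\x} = \bigl[(\z_k - \hat{\x}) - \tfrac{\mu}{\|\z_0\|^2}\nabla f_0(\z_k)\bigr] + \tfrac{\mu}{M\|\z_0\|^2}\mathcal{A}^H(\bfeta)\,\z_k$ and applying the triangle inequality yields $\|\z_{k+1} - \hat{\x}\| \leq \|(\z_k - \hat{\x}) - \tfrac{\mu}{\|\z_0\|^2}\nabla f_0(\z_k)\| + \tfrac{\mu}{\|\z_0\|^2}\,\|\tfrac{1}{M}\mathcal{A}^H(\bfeta)\|\,\|\z_k\|$. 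The first summand is exactly the quantity analyzed in Theorem \ref{thm:Theorem1}, so under the regularity condition and the local Lipschitz bound of Lemma \ref{lem:Lemma5} it contracts by a factor $\sqrt{1 - 2\mu/\alpha}$ against $\|\z_k - \hat{\x}\|$.

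For the noise term, I would use $\|\z_k\| \leq (1+\epsilon)\|\x\|$ inside $E(\epsilon)$, take expectation over $\bfeta$, and invoke Lemma \ref{lem:Lemma6} --- whose proof implicitly controls $\mathbb{E}_{\bfeta}\|\tfrac{1}{M}\mathcal{A}^H(\bfeta)\| \leq (2+\delta)\|\x\|^2/\sqrt{\mathrm{SNR}}$ via the tight-frame bound of Corollary \ref{cor:Corr3}. Combining with $\|\z_0\|^2 = \mathcal{O}(\|\x\|^2)$ produces a linear recursion of the form $\mathbb{E}\|\z_{k+1} - \hat{\x}\| \leq \sqrt{1 - 2\mu/\alpha}\,\mathbb{E}\|\z_k - \hat{\x}\| + \mu'\,(2+\delta)\|\x\|/\sqrt{\mathrm{SNR}}$ with $\mu' = \mathcal{O}(1)$. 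Unrolling this geometric recursion, using $\sum_{j\geq 0}\rho^j = 1/(1-\rho) = \mathcal{O}(\alpha/\mu)$, and absorbing the constant into $\alpha'$ yields \eqref{eq:thm2eq}, with the $k/2$ exponent emerging naturally from the per-iteration square root.

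The main obstacle is the coupling between $\z_k$ and the full noise history through the nonlinear iterations: because the update map is not linear in $\bfeta$, cross terms in a squared-norm recursion cannot be cleared by simple conditioning or zero-mean arguments on $\bfeta$. The triangle-inequality route above sidesteps this by producing a linear rather than quadratic recursion in $\|\z_k - \hat{\x}\|$, at the cost of the looseness signaled by the $\lessapprox$ symbol. A subsidiary technicality is verifying invariance of the iterates within $E(\epsilon)$ in expectation, which follows from the same recursion provided the steady-state noise floor $\alpha'(2+\delta)\|\x\|/\sqrt{\mathrm{SNR}}$ remains below a fraction of $\epsilon\|\x\|$ --- a condition already implied by \eqref{eq:Sec3_5_eq2}.
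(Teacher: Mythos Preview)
Your proposal is correct and follows essentially the same approach as the paper: decompose the noisy gradient into the clean part plus $\tfrac{1}{M}\mathcal{A}^H(\bfeta)\z$, apply the triangle inequality so the clean part contracts by $\sqrt{1-2\mu/\alpha}$ via Theorem~\ref{thm:Theorem1}, bound the noise kick through the estimate $\mathbb{E}_{\bfeta}\|\tfrac{1}{M}\mathcal{A}^H(\bfeta)\|\leq (2+\delta)\|\x\|^2/\sqrt{\mathrm{SNR}}$ from the proof of Lemma~\ref{lem:Lemma6}, and unroll the resulting linear recursion as a geometric series to extract the $\alpha'$ factor. The paper phrases the unrolling as an explicit induction and uses $\|\z_{l-1}\|\approx\|\x\|$ where you use $\|\z_k\|\leq(1+\epsilon)\|\x\|$, but these are the same argument up to the constants absorbed into~$\lessapprox$.
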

\begin{proof}
See Section \ref{sec:Prf7}.
\end{proof}

As a result of Theorem \ref{thm:Thm2}, 
we observe a crucial element for determining the trade-off between the $\alpha$ and $\beta$ parameters.  
The stability guarantees directly incentivize allocating a small value for the parameter $\alpha$, which is inversely related to the magnitude of the $\beta$ parameter. 
Since $\mu \leq {2}/{\beta}$ by definition, a tighter stability bound requires a trade-off from the step size of the algorithm. 
This outcome is indeed expected, as lower SNR in the measurements means more variance for the gradient estimates at the iterative refinement stage, hence one should take less confident steps to counter inaccurate update terms. 
Our framework perfectly captures this phenomenon, and requires small step sizes for improved stability in the algorithm performance while optimizing the noisy landscape of the objective function over the signal domain. 


Furthermore, to guarantee that the iterates formed using the noisy measurements remain in the $\mathit{E}(\epsilon)$, there is an effective upper bound on the $\alpha$ parameter such that
\begin{equation}\label{eq:alphbd}
\alpha \| \x \|^2 \leq \frac{\epsilon \sqrt{\mathrm{SNR}}}{(1 + \epsilon) (2 + \delta)}. 
\end{equation} 
Equivalently, this is a lower bound on the value of $1/(\alpha \| \x \|^2)$ in \eqref{eq:RegCond_Final}, which requires 
\begin{equation}\label{eq:alphbd_2}
\frac{(1 + \epsilon) (2 + \delta)}{\epsilon \sqrt{\mathrm{SNR}}} \leq \frac{1}{\alpha \| \x \|^2} < (1-\hat{\delta})(1-\epsilon)(2-\epsilon)
\end{equation}  
to be satisfied for a finite $\beta$ to exist to attain a practical step-size for the algorithm. 
Thereby, on expectation, iterative updates that are contractions with respect to the distance metric can be achieved, if $\mathrm{SNR}$ is sufficiently high to satisfy
\begin{equation}
\frac{(1 + \epsilon) (2 + \delta)}{\epsilon \sqrt{\mathrm{SNR}}}  < (1-\hat{\delta})(1-\epsilon)(2-\epsilon),
\end{equation}
where both $\epsilon$ and $\hat{\delta}$ are $\mathrm{SNR}$ dependent. 
Since the left- and right-hand-sides of the inequality monotonically decrease and increase, respectively, with increasing $\mathrm{SNR}$, there exists a transition point at any fixed $\delta < 0.184$ value beyond which the inequality holds true. 
This numerical characterization of the $\mathrm{SNR}$ requirements for the convergence of WF precisely corresponds to the lower bound in \eqref{eq:Sec3_5_eq2}. 

Beyond our characterization of the $(\delta, \mathrm{SNR})$ domain, pre-processing techniques developed in the phase retrieval literature \cite{Zhang2017a, mondelli2017fundamental, gao2017phaseless} can be adopted to improve the SNR bound by relaxing our requirements on $\delta$ through the value of $\epsilon$. 
As a result, alternative techniques that improve the accuracy of the spectral estimation can be pursued for practical purposes in order to sustain the convergence guarantees given our sufficient condition below the provided SNR lower bound.

\section{Proofs}\label{sec:Apdx}
In this section we present the proofs of our arguments used in establishing Theorem \ref{thm:Theorem1}, and its corollaries. 
Notably, our results are derived in a deterministic framework based on geometric arguments unlike the probabilistic theory of original WF theory \cite{candes2015phase}. 
The probabilistic nature of the convergence analysis prominent in phase retrieval literature is thereby compressed into a single condition on the lifted forward model. 
From this point on, we frequently use $\tilde{\V} = \v \v^H$ to denote the lifted signal corresponding to an element $\v \in \mathbb{C}^N$ for notational brevity. 
\subsection{Proofs of Lemma \ref{lem:Lemma1} and Corollary \ref{cor:Corr3}}\label{sec:Prf1}
\subsubsection{Lemma \ref{lem:Lemma1}}
Reprising \eqref{eq:WF_init}, from the definition of the lifted forward model $\mathcal{A}$, and spectral matrix $\mathbf{Y}$ we have for any $\x \in \mathbb{C}^N$:
\begin{equation}\label{eq:Eqn1}
\| \frac{1}{M} \mathcal{A}^H \mathcal{A} (\tilde{\X}) - ( \tilde{\X} + \| \x \|^2 \mathbf{I} ) \| \leq \delta \| \x \|^2.
\end{equation}
Over the set of rank-1 matrices matrices, i.e., $\mathit{R}_1 = \{ \mathbf{u}\mathbf{v}^H: \mathbf{u}, \mathbf{v} \in \mathbb{C}^N \}$, we define the operator $\mathcal{R}: \mathit{R}_1 \rightarrow \mathrm{span}(\mathbf{I}) \subset \mathbb{C}^{N \times N}$, such that $\mathcal{R}(\mathbf{u}\mathbf{v}^H) = (\mathbf{v}^H \mathbf{u}) \mathbf{I}$. Then, we define $(\frac{1}{M} \mathcal{A}^H \mathcal{A} - \mathcal{I} - \mathcal{R}) (\tilde{\X}) = {\Delta}(\tilde{\X})$, and by \eqref{eq:Eqn1} we have that $\| {\Delta}(\tilde{\X}) \| \leq \delta \| \x \|^2$. 

Next, we represent the rank-1, PSD matrix $\tilde{\Z} = \z \z^H$ as a linear combination of rank-1 elements in $\mathit{R}_1$. Letting $\mathbf{e} = \mathbf{z} - \mathbf{x}$, we have
\begin{equation}\label{eq:Eqn2}
\tilde{\mathbf{E}} = \tilde{\Z} - \tilde{\X} - \mathbf{e} \x^H - \x \mathbf{e}^H,
\end{equation}
hence $\tilde{\Z} = \tilde{\mathbf{E}} +  \mathbf{e} \x^H + \x \mathbf{e}^H + \tilde{\X}$. In the range of $\mathcal{R}$, distributing over the terms on the right hand side we obtain
\begin{equation}
\mathcal{R}(\tilde{\Z}) = (\| \mathbf{e}  \|^2 + \x^H \mathbf{e} + \mathbf{e}^H \x + \| \x \|^2) \mathbf{I},
\end{equation}
which, from the definition of $\mathbf{e}$, precisely equals to $\| \z \|^2 \mathbf{I}$. 
Thus having $\mathit{R}^{+}_1 \subset \mathit{R}_1$, by moving $\x$ to the left-hand-side $\mathcal{R}$ satisfies,
\begin{equation}
\mathcal{R}(\tilde{\Z} - \tilde{\X} ) = \| \z \|^2 - \| \x \|^2.
\end{equation}
 
Since $\mathcal{I} + \mathcal{R}$ is linear over elements in $\mathit{R}^{+}_1$, ${\Delta}$ is necessarily linear as well knowing that $\mathcal{A}^H \mathcal{A}$ is linear over all $\mathbb{C}^{N \times N}$. 
In addition, since $\langle \mathcal{R}(\tilde{\X}) , \tilde{\X} \rangle_F = \| \tilde{\X} \|_F^2$, $\mathcal{R}$ is self-adjoint, hence ${\Delta}$ is a Hermitian operator by definition. Finally, since $\| {\Delta} ( \tilde{\X} ) \| \leq \delta \| \x \|^2$, we have
\begin{equation}\label{eq:Eqn2a}
\underset{\v \in \mathbb{C}^N / \{0 \}}{\mathrm{max}} \  \frac{| \mathbf{v}^H {\Delta} ( \tilde{\X} ) \mathbf{v} |}{\| \v \|^2} \leq \delta \| \x \|^2. 
\end{equation}
Equivalently, utilizing the view in the lifted domain, we have, $| \mathbf{v}^H {\Delta} ( \tilde{\X} ) \mathbf{v} | = | \langle {\Delta} ( \tilde{\X} ), \tilde{\V}\rangle_F | = | \langle {\Delta} ( \tilde{\V} ), \tilde{\X}\rangle_F |$. 
Hence the concentration property directly implies uniformity of the bound over the set of the domain of operator $\Delta$, i.e., the set of rank-1, PSD matrices as
\begin{equation}\label{eq:Eqn2b}
\underset{\v \in \mathbb{C}^N / \{0 \}}{\mathrm{max}} \  \frac{| \mathbf{x}^H {\Delta} ( \v \v^H ) \mathbf{x} |}{\| \x \|^2 \| \v \|^2} \leq \delta,
\end{equation}
again, for any fixed $\x \in \mathbb{C}^N$.	For an arbitrary deterministic map that already satisfies \eqref{eq:Eqn1}, this completes the proof as the condition also holds for $\hat{\v}$ that maximizes $| \langle {\Delta} ( \tilde{\V} ), \tilde{\X}\rangle_F |/ \| \x \|^2$ as $| \langle {\Delta} ( \tilde{\V} ), \hat{\v} \hat{\v}^H \rangle_F |/ \| \hat{\v} \|^2$, which, by definition, is the spectral norm of $ \| \Delta (\v \v^H) \|$. For uniformity in the complex Gaussian model, we refer the reader to Appendix \ref{sec:App}.


\subsubsection{Corollary \ref{cor:Corr3}}
The proof of Corollary \ref{cor:Corr3} then directly follows from \eqref{eq:WF_init} and 
the definition of the spectral matrix. 
From \eqref{eq:Eqn1} and the definition of the spectral norm, we have
\begin{equation}
\frac{| \v^H \left( \frac{1}{M} \mathcal{A}^H \mathcal{A} (\tilde{\X}) - ( \tilde{\X} + \| \x \|^2 \mathbf{I} ) \right) \v |}{\| \v \|^2} \leq  \delta \| \x \|^2,
\end{equation}
for any $\v \in \mathbb{C}^N / \{0\}$. Hence, for $\v = \x$, using the representation in the lifted domain via the definition of the Frobenius inner product, we have
\begin{equation}
\begin{split}
{| \langle \frac{1}{M} \mathcal{A}^H \mathcal{A} (\tilde{\X}) - ( \tilde{\X} + \| \x \|^2 \mathbf{I} ), \x \x^H \rangle_F |} &\leq \delta {\| \x \|^4}, \\
|\frac{1}{M} \| \mathcal{A}(\tilde{\X}) \|^2 - \| \tilde{\X} \|^2_F - \langle \| \x \|^2 \mathbf{I} , \tilde{\X} \rangle_F | & \leq \delta {\| \tilde{\X} \|^2_F}.
\end{split}
\end{equation}
From the definition of the Frobenius inner product with the identity matrix $\mathbf{I}$ in $\mathbb{C}^{N \times N}$, only the diagonal elements are multiplied and summed with $\tilde{\X} = \x \x^H$, which corresponds to $\| \x \|^2$. Hence, we get, for any $\x \in \mathbb{C}^N$,
\begin{equation}
|\frac{1}{M} \| \mathcal{A}(\tilde{\X}) \|^2 - 2\| \tilde{\X} \|^2_F | \leq \delta {\| \tilde{\X} \|^2_F}.
\end{equation}
Thus, the condition in Corollary \ref{cor:Corr3} is directly implied by \eqref{eq:Eqn1}. 
Furthermore, again using the fact that $\Delta$ is Hermitian, consider the maxima over the unit sphere $\| \x \| = 1$ as
\begin{align}
\underset{ \| \x \| = 1}{\text{max}} \| \Delta (\x \x^H ) \| &= \underset{ \| \x \| = 1}{\text{max}}  \underset{ \| \v \| = 1}{\text{max}} |  \v^H \Delta ( \x \x^H )  \v | \\
&= \underset{ \| \x \| = 1}{\text{max}}  \underset{ \| \v \| = 1}{\text{max}} | \langle \sqrt{\Delta} ( \x \x^H ) , \sqrt{\Delta} (\v \v^H) \rangle_F  |,
\end{align}
where $\sqrt{\Delta}: \mathbb{C}^{N \times N} \rightarrow \mathbb{C}^{N \times N}$ is a (non-unique) square root of the operator $\Delta$. 
Then from Cauchy-Schwartz we have that
\begin{equation}
\underset{ \| \x \| = 1}{\text{max}} \| \Delta (\x \x^H ) \| \leq \underset{ \| \x \| = 1}{\text{max}}  \underset{ \| \v \| = 1}{\text{max}} \| \sqrt{\Delta} ( \x \x^H ) \|_F \| \sqrt{\Delta} ( \v \v^H ) \|_F
\end{equation}
in which we clearly have a maximization that is fully split such that
\begin{equation}\label{eq:e4}
\underset{ \| \x \| = 1}{\text{max}} \| \Delta (\x \x^H ) \| \leq \underset{ \| \x \| = 1}{\text{max}}  \| \sqrt{\Delta} ( \x \x^H ) \|_F^2.  
\end{equation}
From our construction of $\sqrt{\Delta}$, the upper bound in \eqref{eq:e4} yields
\begin{equation}\label{eq:e5}
\underset{ \| \x \| = 1}{\text{max}} \| \Delta (\x \x^H ) \|  \leq \underset{ \| \x \| = 1}{\text{max}} | \x^H \Delta ( \x \x^H) \x |.
\end{equation} 
Since $\| \Delta (\x \x^H ) \| = \underset{\| \v \| = 1}{\text{max}} \ | \v^H \Delta ( \x \x^H )  \v | \geq | \x^H \Delta ( \x \x^H )  \x |$ by definition, \eqref{eq:e5} implies that the maxima necessarily occurs at $\v = \x / \| \x \|$, as
\begin{equation}
\underset{ \| \x \| = 1}{\text{max}} \| \Delta (\x \x^H ) \|  =  \underset{ \| \x \| = 1}{\text{max}} | \langle \Delta ( \x \x^H) , \x\x^H \rangle_F |. 
\end{equation}
Hence, the concentration bound holds uniformly with a constant $\delta$ that is equivalent to the restricted isometry constant of the lifted forward model over the set of rank-1, PSD matrices.

\subsection{Proof of Lemma \ref{lem:Lemma2}} \label{sec:Prf2}
As shown in \cite{candes2015phase} we know that \eqref{eq:WF_init} implies $| \v_0^H \x |^2 \geq (1 - 2\delta) \| \x \|^2$, where $\| \v_0 \| = 1$ is the leading eigenvector of the spectral matrix $\mathbf{Y}$. 
Using Corollary \ref{cor:Corr3}, we know that $\lambda_0 = \| \y \|/ \sqrt{2M}$ is an estimate for the energy of the unknown signal $\x$ such that $(\sqrt{1 - \delta/2}) \| \x \|^2 \leq \lambda_0 \leq (\sqrt{1 + \delta/2}) \| \x \|^2$. Using the definition of the distance metric, and the lower bound from \cite{candes2015phase}, for $\z_0 = \sqrt{\lambda_0} \v_0$, we have
\begin{equation}
\mathrm{dist}^2 (\z_0, \x ) \leq \left(\frac{\lambda_0}{\| \x \|^2} + 1 - 2 \sqrt{\frac{\lambda_0}{\| \x \|^2}} \| \x \| \sqrt{1 - 2 \delta} \right) \| \x \|^2.
\end{equation}
Since the right-hand-side is a convex quadratic function of $\sqrt{\lambda_0}$, its maximum value is reached at the boundary values of $\sqrt{\lambda_0}$. Setting $\lambda_0 = (\sqrt{1 + \delta/2})$, the upper bound is monotonically greater than at $\lambda_0 = (\sqrt{1 - \delta/2})$ for all valid values for $\delta$, hence we conclude that
\begin{equation}
\mathrm{dist}^2 (\z_0, \x ) \leq \left(\sqrt{1 + \frac{\delta}{2}} + 1 - 2 \sqrt{(1 - 2 \delta)\sqrt{1 + \frac{\delta}{2}} } \right) \| \x \|^2.
\end{equation}

\subsection{Proof of Lemma \ref{lem:Lemma3}} \label{sec:Prf3}
\subsubsection{Proof of the Upper Bound}
Let $\hat{\x}$ be the closest solution in $\mathit{P}$ to an arbitrary $\z \in E(\epsilon)$. 
From reverse triangle inequality we have $(1-\epsilon) \| \x \| \leq \| \z \| \leq (1+\epsilon) \| \x \|$. Setting $\mathbf{e} = \z - \hat{\x}$, by \eqref{eq:Eqn2} we have
\begin{equation}\label{eq:EQN}
\tilde{\Z} - \tilde{\X} = \tilde{\mathbf{E}} + \mathbf{e} \hat{\x}^H + \hat{\x} \mathbf{e}^H. 
\end{equation}
Then, for the Frobenius norm of the error in the lifted domain, we have
\begin{equation}
\| \tilde{\Z} - \tilde{\X} \|_F \leq \| \tilde{\mathbf{E}} \|_F + \| \mathbf{e} \hat{\x}^H \|_F + \| \hat{\x} \mathbf{e}^H \|_F. 
\end{equation}
Since all the elements on the right-hand-side are rank-1, and $\z \in \mathit{E}(\epsilon)$, by definition, we have $\| \cdot \| = \| \cdot \|_F$, and
\begin{equation}
\| \tilde{\Z} - \tilde{\X} \|_F \leq \| \mathbf{e} \|^2 + 2 \| \x \| \| \mathbf{e} \| \leq (2+\epsilon) \| \mathbf{e} \| \| \x \|,
\end{equation}
which yields the upper bound as $\| \mathbf{e} \| = \mathrm{dist}(\z, \x)$. 

\subsubsection{Proof of the Lower Bound in Lemma \ref{lem:Lemma3}}
Expanding $\| \tilde{\Z} - \tilde{\X} \|_F$ by the definition of the Frobenius inner product, we have
\begin{equation}\label{eq:Eqn3}
\| \tilde{\Z} - \tilde{\X} \|_F^2 = \| \tilde{\Z} \|_F^2 + \| \tilde{\X} \|_F^2 - 2 \text{Re} \langle \tilde{\Z}, \tilde{\X} \rangle_F.
\end{equation}
Due to rank-1 property, the Frobenius inner product reduces to $2 \text{Re} \langle \tilde{\Z}, \tilde{\X} \rangle_F = 2 | \langle \z, \x \rangle |^2$, and $\| \tilde{\Z} \|_F^2 = \| \z \|^4$, $\| \tilde{\X} \|_F^2 = \| \x \|^4$. 
Hence, \eqref{eq:Eqn3} equals to $\| \z \|^4 + \| \x \|^4 - 2 | \langle \z , \x \rangle |^2$, and
\begin{equation}
\begin{split}
(\| \z \|^4 -  | \langle \z , \x \rangle |^2 ) &+  (\| \x \|^4 - | \langle \z , \x \rangle |^2 ) = \\
&(\| \z \|^2 +  | \langle \z , \x \rangle |)(\| \z \|^2 -  | \langle \z , \x \rangle |) + \\
& (\| \x \|^2 + | \langle \z , \x \rangle |)(\| \x \|^2 - | \langle \z , \x \rangle |). 
\end{split}
\label{eq:Eqn4}
\end{equation}
Since $\text{dist}^2(\z, \x) = \| \z \|^2 + \| \x \|^2 - 2 | \langle \z, \x \rangle | = \| \z - \hat{\x} \|^2 \geq 0$, we can lower bound \eqref{eq:Eqn3} using \eqref{eq:Eqn4} as
\begin{equation}\label{eq:Eqn5}
\begin{split}
\|  \tilde{\z} - \tilde{\x}  \|_F^2 \geq \text{min} &\left( (\| \z \|^2 +  | \langle \z , \x \rangle |), (\| \x \|^2 + | \langle \z , \x \rangle |)\right) \\
&\times \left( \| \z \|^2 + \| \x \|^2 - 2 | \langle \z, \x \rangle | \right).
\end{split}
\end{equation}
Knowing that $\text{dist}^2(\z, \x) \leq \epsilon^2 \| \x \|^2$ and the result from the reverse triangle inequality on $\| \z \|$, the terms within the minimization are further lower bounded using
\begin{eqnarray}
 2 | \langle \z, \x \rangle |  & \geq & \| \z \|^2 + \| \x \|^2 - \epsilon^2 \| \x \|^2 \nonumber \\
 | \langle \z, \x \rangle |  & \geq & (1 - \epsilon) \| \x \|^2. 
\end{eqnarray}
We then get the bound on the scalar multiplying $\text{dist}^2(\z, \x)$ as
\begin{equation}
\begin{split}
\text{min} \left( (\| \z \|^2 +  | \langle \z , \x \rangle |), (\| \x \|^2 + | \langle \z , \x \rangle |)\right)  \\
\geq  ((1 - \epsilon)^2 + (1-\epsilon)) \| \x \|^2,
\end{split}
\end{equation}
which yields the lower bound of Lemma \ref{lem:Lemma3}
\begin{equation}
 \| \tilde{\z} - \tilde{\x} \|_F \geq \sqrt{(1-\epsilon)(2-\epsilon)} \ \text{dist} (\z , {\x} ) \| \x \|,
\end{equation}
and the proof is complete. 
\subsection{Proof of Lemma \ref{lem:Lemma4}} \label{sec:Prf4}
From Lemma \ref{lem:Lemma1}, we have 
\begin{equation}\label{eq:Eqn6}
\begin{split}
&\frac{1}{M} \| \mathcal{A}(\tilde{\Z} - \tilde{\X}) \|^2 = \langle \mathcal{A}^H \mathcal{A} (\tilde{\Z} - \tilde{\X}) \rangle_F = \\
&\langle \tilde{\Z} - \tilde{\X} + (\| \z \|^2 - \| \x \|^2) \mathbf{I} + {\Delta} (\tilde{\Z} - \tilde{\X} ) , \tilde{\Z} - \tilde{\X} \rangle_F.
\end{split}
\end{equation}  
From the linearity of the inner product, \eqref{eq:Eqn6} becomes $\| \tilde{\Z} - \tilde{\X} \|_F^2 + (\| \z \|^2 - \| \x \|^2) (\langle \mathbf{I}, \tilde{\z} \rangle_F - \langle \mathbf{I}, \tilde{\x} \rangle_F) + \langle {\Delta} (\tilde{\Z} - \tilde{\X} ) , \tilde{\Z} - \tilde{\X} \rangle_F$. 
Since the Frobenius inner product of a matrix with the identity matrix $\mathbf{I}$ is simply the sum of its diagonal terms, and the lifted signals have the auto-correlation of their entries at diagonals, the second term reduces to $(\| \z \|^2 - \| \x \|^2)^2 = \| \z \|^4 + \| \x \|^4 - 2 \| \z \|^2 \| \x \|^2$. From Cauchy-Schwartz, $(\| \z \|^2 - \| \x \|^2)^2 \leq \| \z \|^4 + \| \x \|^4 - 2| \langle \z, \x \rangle |^2 = \| \tilde{\Z} - \tilde{\X} \|_F^2$, by definition. Hence, we obtain
\begin{equation}\label{eq:Bound1}
\frac{1}{M} \| \mathcal{A}(\tilde{\Z} - \tilde{\X}) \|^2 \leq 2 \| \tilde{\Z} - \tilde{\X} \|_F^2 +  \langle {\Delta} (\tilde{\Z} - \tilde{\X} ) , \tilde{\Z} - \tilde{\X} \rangle_F,
\end{equation}
and in the other direction, since $(\| \z \|^2 - \| \x \|^2)^2$ is the square of a real valued quantity, it is lower bounded by $0$, which yields
\begin{equation}\label{eq:Bound2}
\frac{1}{M}\| \mathcal{A}(\tilde{\Z} - \tilde{\X}) \|^2 \geq \| \tilde{\Z} - \tilde{\X} \|_F^2 +  \langle {\Delta} (\tilde{\Z} - \tilde{\X} ) , \tilde{\Z} - \tilde{\X} \rangle_F.
\end{equation}
It remains to upper bound the quantity $| \langle {\Delta} (\tilde{\Z} - \tilde{\X} ) , \tilde{\Z} - \tilde{\X} \rangle_F |$. 
Using the definition in \eqref{eq:EQN}, and the linearity of ${\Delta}$ 
, from Cauchy Schwartz inequality, we have, for any $\z \in \mathit{E}(\epsilon)$,
\begin{equation}\label{eq:Eqn7}
\begin{split}
& | \langle {\Delta} (\tilde{\E} + \e \hat{\x}^H + \hat{\x} \e^H ) , \tilde{\Z} - \tilde{\X} \rangle_F | \\
&\leq \| \tilde{\Z} - \tilde{\X} \|_* \left(\| {\Delta}(\tilde{\E}) \|  + \| {\Delta}({\e}\hat{\x}^H) \| + \| {\Delta}(\hat{\x}{\e}^H) \| \right), \\
&\leq  \sqrt{2} \| \tilde{\Z} - \tilde{\X} \|_F \delta (\| \tilde{\E} \| + \| {\e}\hat{\x}^H \|  + \| \hat{\x}{\e}^H \|)\\
&=   \sqrt{2} \| \tilde{\Z} - \tilde{\X} \|_F \delta ( \| \e \|^2 + 2 \| \x \| \| \e \| )  \\
&\leq \sqrt{2} (2 + \epsilon) \delta \| \tilde{\Z} - \tilde{\X} \|_F \ \mathrm{dist}(\z, \x) \| \x \|. 
\end{split}
\end{equation}
Finally, using the lower bound from Lemma \ref{lem:Lemma3} where $\epsilon < 1$, we obtain
\begin{equation}\label{eq:Eqn8}
\sqrt{2} (2 + \epsilon) \delta \| \tilde{\Z} - \tilde{\X} \|_F \ \mathrm{dist}(\z, \x) \| \x \| \leq \frac{\sqrt{2} (2 + \epsilon)\delta }{\sqrt{(1-\epsilon)(2-\epsilon)}} \| \tilde{\Z} - \tilde{\X} \|_F^2.
\end{equation}
Thereby, setting $\hat{\delta}  = \frac{\sqrt{2} (2 + \epsilon)}{\sqrt{(1-\epsilon)(2-\epsilon)}} \delta$, and combining the bounds \eqref{eq:Bound1}, \eqref{eq:Bound2} and \eqref{eq:Eqn8}, the proof is complete. 

\subsection{Proof of Lemma \ref{lem:Lemma5}} \label{sec:Prf5} 

Recall the definition of the gradient in \eqref{eq:GradEq}. 
By Lemma \ref{lem:Lemma4} a RIP-type property is satisfied locally for ${\z \z^H} - {\x \x^H}$ if $\z \in E(\epsilon)$. As a result we can express $\nabla f(\z) = \tilde{\Z} \z - \tilde{\X}\z + (\| \z \|^2 - \| \x \|^2)\z + {\Delta}(\tilde{\Z} - \tilde{\X}) \z = \| \z \|^2 \z - (\hat{\x}^H \z)  \hat{\x} + (\| \z \|^2 - \| \x \|^2)\z + {\Delta}(\tilde{\Z} - \tilde{\X}) \z$, with $\hat{\x}$ again denoting the closest solution in $\mathit{P}$ to a given $\z \in \mathbb{C}^N$. Then, we upper bound $\nabla f (\z)$ as follows:
\begin{equation}
\begin{split}
\| \nabla f(\z) \| \leq \| \| \z \|^2 \z - (\hat{\x}^H \z)  \hat{\x} \| &+ | \| \z \|^2 - \| \x \|^2|  \| \z \|
 \\
&+ \| {\Delta}(\tilde{\Z} - \tilde{\X}) \| \| \z \|  \\
\end{split}
\end{equation}
from which, knowing that $\| \z \|^2 \z - (\hat{\x}^H \z)  \hat{\x} = (\| \z \|^2 - \hat{\x}^H \z) \z + (\hat{\x}^H \z)(\z - \hat{\x})$, where $\e = \z - \hat{\x}$ with $\hat{\x}$ again denoting the closest solution to $\z$ in $\mathit{P}$, and $| \| \z \|^2 - \| \x \|^2| \leq \| \tilde{\Z} - \tilde{\X} \|_F$, we have
\begin{equation}
\begin{split}
\| \nabla f(\z) \| \leq \| \z \| \bigl(&\mathrm{dist}(\z, \x)  (\| \z \|  + \| \x \| ) \\ 
& + \| \tilde{\Z} - \tilde{\X} \|_F +  \| {\Delta}(\tilde{\Z} - \tilde{\X}) \| \bigr)  .
\end{split}
\end{equation}
Again considering the expression $\tilde{\Z} - \tilde{\X} = \tilde{\E} + \e \hat{\x}^H + \hat{\x} \e^H$, we have $\| {\Delta}(\tilde{\Z} - \tilde{\X}) \| \leq \| \Delta ( \tilde{\E} ) \| + \| \Delta (\e \hat{\x}^H + \hat{\x} \e^H )\| $. 
Since $\e \hat{\x}^H + \hat{\x} \e^H$ is at most of rank-2 by definition, let $\e \hat{\x}^H + \hat{\x} \e^H = \sum_{i = 1}^2 \lambda_i \v_i \v_i^H$, by which we obtain  $ \| \Delta (\e \hat{\x}^H + \hat{\x} \e^H )\| \leq | \lambda_1 | \| \Delta (\v_1 \v_1^H) \| + |\lambda_2| \| \Delta (\v_2 \v_2^H ) \|$. 
Thereby, using Lemma \ref{lem:Lemma1}, we have $\| {\Delta}(\tilde{\Z} - \tilde{\X}) \| \leq \delta ( \| \e \|^2 + | \lambda_1 | + | \lambda_2 |)$. Furthermore, $| \lambda_1 | + | \lambda_2 |$ precisely corresponds to the nuclear norm of $\e \hat{\x}^H + \hat{\x} \e^H$, which is upper bounded as $\| \e \hat{\x}^H + \hat{\x} \e^H \|_* \leq \| \e \hat{\x}^H \|_* + \| \hat{\x} \e^H \|_* \leq 2 \| \e \| \| \x \|$. As a result, invoking the upper bound on the error in the lifted domain via Lemma \ref{lem:Lemma3}, we obtain
\begin{equation}\label{eq:eqL5}
\begin{split}
\| \nabla f(\z) \|  &\leq \| \z \| \left( (\| \z \| + \| \x \| )+ (1 + \delta) (2 + \epsilon) \| \x \| \right) \mathrm{dist}(\z, \x) \\
& \leq (1 + \epsilon) (2+ \epsilon) (2+ {\delta}) \| \x \|^2 \mathrm{dist}(\z, \x). 
\end{split}
\end{equation}
In \eqref{eq:eqL5} we've used the fact that, for $\z \in \mathit{E}(\epsilon)$, $\| \z \| \leq (1 + \epsilon)\| \x \|$. Hence, the proof of local Lipschitz differentiability is complete, with a constant $c (\delta) = (1 + \epsilon) (2+ \epsilon) (2+ {\delta})$. 

\subsection{Proof of Lemma \ref{lem:Lemma6}} \label{sec:Prf6}
Using the definition of the spectral matrix as $\mathbf{Y} = \frac{1}{M} \mathcal{A}^H (\mathbf{y})$, we have, for any $\x \in \mathbb{C}^N$, and any realization of $\bfeta \in \mathbb{R}^M$
\begin{align}
\| \mathbf{Y} - (\tilde{\X} + \| \x \|^2 \mathbf{I}) \| \leq \| \frac{1}{M} \mathcal{A}^H \mathcal{A} (\tilde{\X} ) &- (\tilde{\X} + \| \x \|^2 \mathbf{I}) \\
&+ \| \frac{1}{M} \mathcal{A}^H(\bfeta) \|, \nonumber
\end{align}
where $\mathbf{y} = \mathcal{A}(\tilde{\X}) + \bfeta$, and the inequality simply follows from the triangle inequality.
Under the assumption that our sufficient condition in \eqref{eq:WF_init} holds, we get
\begin{equation}\label{eq:Sec3_5proofLem1}
\| \mathbf{Y} - (\tilde{\X} + \| \x \|^2 \mathbf{I}) \|  \leq \delta \| \x \|^2 + \| \frac{1}{M} \mathcal{A}^H(\bfeta) \|.
\end{equation}
Then, using the definition of the spectral norm, we have
\begin{equation}\label{eq:eqL6}
\begin{split}
 \| \frac{1}{M} \mathcal{A}^H(\bfeta) \| 
 &= \underset{\z \in \mathbb{C}^N, \| \z \| = 1}{\mathrm{max}} \ | \langle \frac{1}{M} \mathcal{A}^H(\bfeta), \z \z^H \rangle_F | \\
 &= \underset{\z \in \mathbb{C}^N, \| \z \| = 1}{\mathrm{max}} \ | \langle \frac{1}{M}\bfeta, \mathcal{A}(\z \z^H) \rangle | \\
 & \leq \frac{1}{M} \| \bfeta \| \| \mathcal{A}(\z \z^H) \| \leq \frac{\sqrt{2 + \delta}}{\sqrt{M}} \| \bfeta \|
\end{split}
\end{equation}
where we used the adjoint definition over the Frobenius inner product for $\mathcal{A}$, followed by Cauchy-Schwartz inequality, and the upper bound from Corollary \ref{cor:Corr3} $ \| \mathcal{A} ({\z \z^H}) \|^2/M \leq ({2 + \delta}) \| {\z \z^H} \|^2_F$, with $\| \z \| = 1$. 
Reorganizing the last inequality in \eqref{eq:eqL6}, we obtain
\begin{equation}
 \| \frac{1}{M} \mathcal{A}^H(\bfeta) \| \leq  {\sqrt{2 + \delta}}\left(\frac{1}{\sqrt{M}}\| \mathcal{A}(\tilde{\X}) \| \right) \frac{\| \bfeta \|}{\| \mathcal{A}( \tilde{\X}) \|}.
\end{equation}
Invoking the upper bound from Corollary \ref{cor:Corr3} once again, along with the definition of $\mathrm{SNR}$, and Jensen's inequality, we obtain
\begin{equation}\label{eq:Sec3_5lem_prooffn}
\mathbb{E}_{\bfeta} \left[{\sqrt{2 + \delta}}\left(\frac{1}{\sqrt{M}}\| \mathcal{A}(\tilde{\X}) \| \right)\frac{\| \bfeta \|}{\| \mathcal{A}(\tilde{\X}) \|} \right] \leq \frac{2+\delta}{\sqrt{\mathrm{SNR}}} \| \tilde{\X} \|_F,
\end{equation}
and plugging the bound in \eqref{eq:Sec3_5lem_prooffn} into \eqref{eq:Sec3_5proofLem1} the proof is complete. 

\subsection{Proof of Theorem \ref{thm:Thm2}} \label{sec:Prf7}
Given the lifted domain definition of the clean gradient term in \eqref{eq:GradEq}, using the noisy measurements $\mathbf{y} = \mathcal{A}({\x \x^H}) + \bfeta$ with $\bfeta \in \mathbb{R}^M$, we define 
\begin{equation}
\nabla \tilde{f}(\z) = \left( \frac{1}{M} \mathcal{A}^H \mathcal{A}(\tilde{\Z} - \tilde{\X}) + \frac{1}{M} \mathcal{A}^H (\bfeta) \right) \z. 
\end{equation}
Having $\nabla {f}(\z)  = \frac{1}{M} \mathcal{A}^H \mathcal{A}(\tilde{\Z} - \tilde{\X})\z$ as the ideal gradient estimate and setting $\mu_{k+1} = \mu'$, we analyze the following updates:
\begin{equation}
\begin{split}
\z_{k + 1} & = (\z_{k}  - \frac{\mu'}{\| \x \|^2}  \nabla {{f}}(\z_k)) + \frac{\mu'}{\| \x \|^2} \frac{1}{M} \mathcal{A}^H (\bfeta)\z_k \\
&= \hat{\z}_{k + 1} + \frac{\mu}{\| \x \|^2} \frac{1}{M} \mathcal{A}^H (\bfeta)\z_k, 
\end{split}
\end{equation}
where $\hat{\z}_{k + 1}$ denotes the iterate obtained from the ideal update given the current estimate $\z_k$. 
We now approach the proof by induction. 
Starting from the first iteration $k = 0$, we have
\begin{equation}
\z_{1} =  \hat{\z}_{1} + \frac{\mu}{\| \x \|^2} \frac{1}{M} \mathcal{A}^H (\bfeta)\z_0, 
\end{equation}
\begin{equation}
\begin{split}
\mathrm{dist}(\z_{1}, \x) \leq \| \z_{1} - \hat{\x}_t \| &\leq \| \z_{1}  - \hat{\z}_{1} \| + \| \hat{\z}_{1} - \hat{\x} \| \\
&=  \mathrm{dist}(\hat{\z}_{1}, \x) + \| \z_{1}  - \hat{\z}_{1} \|.
\end{split}
\end{equation}
Furthermore given \eqref{eq:alphbd}, the iterates are guaranteed to stay in the $\epsilon$-neighborhood determined by the spectral initialization, which via Theorem \ref{thm:Theorem1} guarantees 
\begin{equation}
\mathrm{dist}(\hat{\z}_{1}, \x) \leq \epsilon \| \x \| (1 - \frac{2 \mu'}{\alpha'} )^{1/2},
\end{equation}
under the validity of \eqref{eq:Sec3_5_eq2}, where $\mu' / \alpha' = \mu / \alpha$. 
Repeating for the next iteration, under the validity of \eqref{eq:Sec3_5_eq2} and \eqref{eq:alphbd}, we have that
\begin{equation}
\begin{split}
\mathrm{dist}(\z_2, \x) &\leq \| \z_2 - \hat{\z}_2 \| + \mathrm{dist}(\hat{\z}_2, \x)  \\
&\leq (1 - \frac{2 \mu}{\alpha} ) \epsilon \| \x \| + \sum_{l = 1}^2 (1 - \frac{2 \mu}{\alpha} )^{\frac{l-1}{2}} \| \z_{l}  - \hat{\z}_{l} \|,
\end{split}
\end{equation}
and by induction, we obtain
\begin{equation}
\mathrm{dist}(\z_k, \x) \leq \epsilon (1 - \frac{2 \mu}{\alpha} )^{k/2} \| \x \| + \sum_{l = 1}^{k} (1 - \frac{2\mu}{\alpha})^{\frac{l-1}{2}} \| \z_l - \hat{\z}_l \|. 
\end{equation}
Recalling that $\| \z_l - \hat{\z}_l \| = \| \frac{\mu'}{\| \x \|^2} \frac{1}{M} \mathcal{A}^H (\bfeta)\z_{l-1} \|$, we can bound the term within the summation as follows:
\begin{equation}\label{eq:eqL7}
\| \z_l - \hat{\z}_l \| \leq  \frac{\mu'}{\| \x \|^2} \| \z_{l-1} \| \frac{(2 + \delta)}{\sqrt{\mathrm{SNR}}} \| \x \|^2 := U_l.
\end{equation}
After summing both sides in \eqref{eq:eqL7}, we approximately obtain
\begin{equation}
\sum_{l= 1}^k U_l \approx  \frac{(2 + \delta)}{\sqrt{\mathrm{SNR}}}  \sum_{l = 1}^{k} \mu' (1 - \frac{2\mu'}{\alpha'})^{\frac{l-1}{2}} \| \x \|
\end{equation}
where we have used that the norms of the iterates $\| \z_{l-1} \| \approx \| \x \|$. 
Then, applying the geometric sum formula, we obtain
\begin{equation}
\begin{split}
\sum_{l = 1}^k U_l & \approx \frac{(2 + \delta)}{\sqrt{\mathrm{SNR}}} \mu' \| \x \| \frac{1 - (1 - \frac{2\mu'}{\alpha'})^{\frac{k-1}{2}} }{1 - (1 - \frac{2\mu'}{\alpha'})^{\frac{1}{2}}} \\
&\leq  \frac{(2 + \delta)}{\sqrt{\mathrm{SNR}}} \mu' \| \x \|  \frac{1+ \sqrt{1 - \frac{2 \mu'}{\alpha'}}}{ (1- \sqrt{1 - \frac{2 \mu'}{\alpha'}}) (1+ \sqrt{1 - \frac{2 \mu'}{\alpha'}}) } \\
& \leq  \frac{(2 + \delta)}{\sqrt{\mathrm{SNR}}} \| \x \| \frac{2\mu' \alpha'}{2\mu'}
\end{split}
\end{equation}
which completes the proof. 

\section{Conclusion}
\label{sec:Conc}

This paper analyzes the exact recovery guarantees of the non-convex phase retrieval framework of Wirtinger Flow through a novel perspective in the lifted domain. 
Our approach quantifies a regime in which the concentration bound of the spectral matrix geometrically implies the validity of the regularity condition.  
As a result, we identify a sufficient condition under which the convergence to the true solution is guaranteed deterministically via Wirtinger Flow, starting from the estimate obtained from the spectral initialization.
Notably, our results address a theoretical gap that exists in phase retrieval literature, in which convergence arguments are predominantly probabilistic in nature. 
Furthermore, the deterministic convergence arguments developed in this paper rely on a less stringent restricted isometry type property than those of state-of-the art low rank matrix recovery methods. 
Although numerical simulations on specific problem domains are beyond the scope of this paper, our results culminate into a framework that is highly relevant to applications such as wave-based imaging, in which the underlying scattering phenomenon is typically a deterministic map.  
In future work, we will study the impact of regularization on our framework, and investigate improvements on our recovery guarantees via alternative initialization schemes to improve its recovery guarantees towards practical measurement models, such as coded diffraction patterns and 2D-Fourier slices. 
Further directions of our interest also include the study of amplitude-based loss functions to obtain a more inclusive framework beyond our specified loss function over intensity measurements. 
Namely, the superior sample complexity and convergence rates these methods were shown to enjoy motivates the study of our RIP-based approach for the analysis of amplitude-based loss functions or for possible analogies between the two problems.

\label{sec:refs}

\bibliographystyle{IEEEtran}
{
\bibliography{citations,ref1,ref1_v2,ref1_alt}}

\appendix
\subsection{Uniformity for the Gaussian Model}\label{sec:App}
We begin from the concentration bound which is known to hold for the complex Gaussian sampling model when $M \geq \mathcal{O}(N \log N)$ via \cite{candes2015phase}:
\begin{equation}\label{eq:e1}
\| \frac{1}{M} \mathcal{A}^H \mathcal{A}( \x \x^H ) - (\| \x \|^2 \mathbf{I} + \x \x^H ) \| \leq {\delta} \| \x \|^2,
\end{equation}
at a fixed ${\delta} > 0$ with probability $1 - 5\mathrm{e}^{-\gamma N}$ for an appropriately chosen $\gamma({\delta}) > 0$, for \emph{any} fixed $\x \in \mathbb{C}^N$ via unitary invariance on an event\footnote{See A.4.2 in \cite{candes2015phase} for details.} $E_0$ 
which holds with probability $1 - 4/N^2$.
We need to show \eqref{eq:e1} holds for all $\x \in \mathbb{C}^N$ uniformly in the event that $E_0$ holds. Note that due to the Hermitian property of $\Delta$, we established in Lemma \ref{lem:Lemma1} that the condition holds uniformly over $\x$ if $\v$ is fixed and vice-versa. 
We also know from Corollary \ref{cor:Corr3} that we can equivalently show that $|\langle
\Delta(\x \x^H), \x \x^H | \leq {\delta}$ holds uniformly over the unit sphere in $\mathbb{C}^N$ to establish uniformity of \eqref{eq:e1}, since the maxima over $\x$ was proven to be identical for the two conditions. 

Now, on the event that $E_0$ holds, \eqref{eq:Eqn1} by definition directly implies $| \langle \Delta ( \x \x^H) , \x\x^H \rangle_F | \leq {\delta} $, for any $\x \in \mathbb{C}^N$ with $\| \x \| = 1$ at the same probability when $M \geq \mathcal{O}(N \log N)$.
It then remains to extend this result over all $\x \in \mathbb{C}^N$ via an $\epsilon$-net argument.
Having $| \langle \Delta ( \x \x^H) , \x\x^H \rangle_F | = | M^{-1} \sum_{m = 1}^M | \a_m^H \x |^4  - 2 | \leq \tilde{\delta} $, with $\mathbb{E}[| \a_m^H \x |^4] = 2$,  we follow the methodology of \cite{baraniuk2008simple} for bounding the a quantity around its expectation uniformly. Assume the bound we have at hand holds with $\tilde{\delta}$ at any $\z$ on the unit sphere, such that $\gamma$ is properly set to have
\begin{equation}\label{eq:a1}
\text{Pr} \left( \vert \frac{1}{2M} \| \mathcal{A}(\z \z^H) \|^2  - 1 \vert \geq \frac{\tilde{\delta}}{2}  \right) \leq 5\mathrm{e}^{- \gamma N}.
\end{equation}
We first need to perform the union bound over a properly defined $\epsilon$-net of $\mathcal{S}_{\epsilon}$ on the unit sphere, and then control any perturbations when generalizing the result over to the whole domain. 
To determine the appropriate $\epsilon$-net, we first have to consider how \emph{tight} the perturbation must be bounded.
From \eqref{eq:a1} we trivially have
\begin{equation}
(1 - \tilde{\delta}/2) \leq  \frac{1}{\sqrt{2M}} \| \mathcal{A}(\z \z^H) \| \leq (1 + \tilde{\delta}/2), 
\end{equation}
which will hold for all $\z \in \mathcal{S}_{\epsilon}$ once the union bound is performed. 
Next we define $A$ as the smallest number such that
\begin{equation}\label{eq:a2}
 \frac{1}{\sqrt{2M}} \| \mathcal{A}(\x \x^H) \| \leq (1 + A) \quad \text{for all} \ \| \x \| = 1. 
\end{equation}
Hence, by definition of $A$ we have that
\begin{align}
\| \mathcal{A}(\x \x^H) \| &\leq \| \mathcal{A}(\z \z^H) \| + \| \mathcal{A}(\x \x^H - \z \z^H) \| \\
\frac{1}{\sqrt{2M}}\| \mathcal{A}(\x \x^H) \| &\leq 1 + \tilde{\delta}/2 + (1 + A) \|\x \x^H - \z \z^H\|_*
\end{align}
where we have used the fact that $\| \mathcal{A}(\x \x^H - \z \z^H) \| \leq |\lambda_1| \|\mathcal{A}(\v_1 \v_1^H)\| + |\lambda_2| \|\mathcal{A}(\v_2 \v_2^H)\|$, with $\x \x^H - \z \z^H = \lambda_1 \v_1 \v_1^H + \lambda_2 \v_2 \v_2^H$. 
Since by definition $A$ is the smallest value for which \eqref{eq:a2} holds, we have that
\begin{equation}
A \leq \tilde{\delta}/2 + (1 + A)(|\lambda_1| + |\lambda_2|).
\end{equation}
Setting $\sigma = |\lambda_1| + |\lambda_2|$, we have
\begin{equation}
A \leq  \frac{ \tilde{\delta}/2 + \sigma }{1 - \sigma},
\end{equation}
where we wish to have an $\epsilon$ such that $A \leq \tilde{\delta}$, which indicates
\begin{equation}
 \frac{ \tilde{\delta}/2 + \sigma }{1 - \eta} \leq \tilde{\delta} \ \rightarrow \ \sigma \leq \frac{\tilde{\delta}}{2(1+\tilde{\delta})}
\end{equation}
which is satisfied for $\sigma \leq \tilde{\delta}/4$ by definition since $\tilde{\delta} < 1$. 
Now we know from Lemma \ref{lem:Lemma3} that for any $\x$ with $\| \x \| = 1$, and any $\z$ where $\| \x - \z \| \leq \epsilon \| \x \|$ we have that
\begin{equation}
\sigma = \|\x \x^H - \z \z^H\|_* \leq (2 + \epsilon) \| \x - \z \| \| \x \| \leq (2 + \epsilon) \epsilon.
\end{equation}
As a result, we can determine an $\epsilon$-net on the unit sphere in the signal domain, such that a perturbation is controlled in the lifted domain. It essentially requires that $\tilde{\delta}/4 = 2 \epsilon + \epsilon^2$. Since $\tilde{\delta}/4  + 1 = (1+\epsilon)^2$, with an $\epsilon$-net $\mathcal{S}_{\epsilon}$ with
\begin{equation}
\epsilon = \sqrt{\frac{\tilde{\delta}}{4} + 1} - 1 \approx \frac{\tilde{\delta}}{8}
\end{equation}
we obtain the desired result of $A \leq \tilde{\delta}$. The lower inequality then follows as $1 - \tilde{\delta}/2 - (1 + \tilde{\delta})\tilde{\delta}/4 \geq 1 - \tilde{\delta}$ as noted in \cite{baraniuk2008simple}. 
Thereby, generalizing a union bound over the set $\mathcal{S}_{\epsilon}$ with cardinality $k$, for all $\z \in  \mathcal{S}_{\epsilon}$ we have that
\begin{equation}\label{eq:a3}
\text{Pr} \left( \vert \frac{1}{2M} \| \mathcal{A}(\z \z^H) \|^2  - 1 \vert \leq \frac{\tilde{\delta}}{2}  \right) \leq 1 - 5(24/\tilde{\delta})^k \mathrm{e}^{- \gamma N},
\end{equation}
where we used the covering number of $\mathcal{S}_{\epsilon}$ from \cite{baraniuk2008simple} via reference \cite{devore1993constructive}.
Hence the concentration bound in \eqref{eq:e1} holds uniformly over all $\x$ with $\delta = 2\tilde{\delta}$ with probability $1 - 5(24/\tilde{\delta})^k \mathrm{e}^{- \gamma N} - 4N^{-2}$ when $M = \mathcal{O}(N \log N)$, and the proof is complete. 


\end{document}